\def\doi#1{\href{https://doi.org/\detokenize{#1}}{\url{https://doi.org/\detokenize{#1}}}}
\DeclareMathOperator {\bigO}{\mathcal{O}}
\newcommand{\cl}{\phi}
\newcommand{\llo}{C,\cl}
\author{Theo Conrads}{Department of Computer Science, University of Cologne, Germany}{}{}{}
\author{Lukas Drexler}{Faculty of Mathematics and Natural Sciences, Department of Computer Science,\\ Heinrich Heine University Düsseldorf, Germany}{lukas.drexler@hhu.de}{https://orcid.org/0000-0001-9395-6711}{Gefördert durch die Deutsche Forschungsgemeinschaft (DFG) -- Projektnummer 456558332 --- funded by the Deutsche Forschungsgemeinschaft (DFG, German Research Foundation) -- Projektnummer 456558332.}
\author{Joshua K\"onen}{Institute of Computer Science, University of Bonn, Germany}{s6jjkoen@uni-bonn.de}{https://orcid.org/0000-0003-4245-4812}{funded by Lamarr Institute for Machine Learning and Artificial Intelligence --- lamarr-institute.org}
\author{Daniel R. Schmidt}{Faculty of Mathematics and Natural Sciences, Department of Computer Science\\ Heinrich Heine University Düsseldorf, Germany}{dschmidt@hhu.de}{https://orcid.org/0000-0001-7381-912X}{}
\author{Melanie Schmidt}{Faculty of Mathematics and Natural Sciences, Department of Computer Science,\\ Heinrich Heine University Düsseldorf, Germany}{mschmidt@hhu.de}{https://orcid.org/0000-0003-4856-3905}{Gefördert durch die Deutsche Forschungsgemeinschaft (DFG) -- Projektnummer 456558332 --- funded by the Deutsche Forschungsgemeinschaft (DFG, German Research Foundation) -- Projektnummer 456558332.}
\begin{document}
\title{Local Search k-means++ with Foresight}
\authorrunning{T. Conrads et al.}

\Copyright{Theo Conrads, \and Lukas Drexler, \and Joshua K\"onen, \and Daniel R. Schmidt, \and Melanie Schmidt} 
\ccsdesc{Mathematics of computing~Combinatorial algorithms}
\ccsdesc{Theory of computation~Facility location and clustering}
\ccsdesc{Information systems~Clustering}

\keywords{$k$-means clustering, kmeans++, greedy, local search} 

\supplement{\url{https://github.com/algo-hhu/FLSpp}}
\EventEditors{John Q. Open and Joan R. Access}
\EventNoEds{2}
\EventLongTitle{42nd Conference on Very Important Topics (CVIT 2016)}
\EventShortTitle{CVIT 2016}
\EventAcronym{CVIT}
\EventYear{2016}
\EventDate{December 24--27, 2016}
\EventLocation{Little Whinging, United Kingdom}
\EventLogo{}
\SeriesVolume{42}
\ArticleNo{23}

\maketitle
\begin{abstract}
Since its introduction in 1957, Lloyd's algorithm for $k$-means clustering has been extensively studied and has undergone several improvements. While in its original form it does not guarantee any approximation factor at all, Arthur and Vassilvitskii (SODA 2007) proposed $k$-means++ which enhances Lloyd's algorithm by a seeding method which guarantees a $\bigO (\log k)$-approximation in expectation. More recently, Lattanzi and Sohler (ICML 2019) proposed LS++ which further improves the solution quality of $k$-means++ by local search techniques to obtain a $\bigO(1)$-approximation.
On the practical side, the $\emph{greedy}$ variant of $k$-means++ is often used although its worst-case behaviour is provably worse than for the standard $k$-means++ variant.

We investigate how to improve LS++ further in practice. We study two options for improving the practical performance: (a) Combining LS++ with  greedy $k$-means++ instead of $k$-means++, and (b) Improving LS++ by better entangling it with Lloyd's algorithm. 
Option (a) worsens the theoretical guarantees of $k$-means++ but improves the practical quality also in combination with LS++ as we confirm in our experiments. Option (b) is our new algorithm, Foresight LS++. 
We experimentally show that FLS++ improves upon the solution quality of LS++. It retains its asymptotic runtime and its worst-case approximation bounds.

\end{abstract}
\clearpage
\setcounter{page}{1}
\section{Introduction}

In the vast area of clustering, one of the most popular approaches is min-sum-of-squared-error clustering, as modeled by the \emph{$k$-means} cost function. Given a set of vectors $X = \{x_1, \ldots, x_n\}\subset \mathbb{R}^d$ and a number of clusters $k \in \mathbb{N}$, the aim of $k$-means clustering is to find a set $C=\{c_1,\ldots,c_k\}$ of $k$ centers that minimizes
$
 \sum_{i=1}^n \min_{j=1,\ldots,k} || x_i - c_j||^2,
$
i.e., the sum of the squared distances of all points to their respective closest center in $C$.
For half a century, the most known algorithm for minimizing the $k$-means cost function was a local search heuristic often called \emph{$k$-means algorithm} or \emph{Lloyd's algorithm}, developed in 1957. The main steps of the algorithm are as follows:
\begin{enumerate}
 \item Find an initial set of $k$ centers, e.g., randomly chosen from $X$.
 \item Repeat a given number $s$ times or until convergence: \label{lloyds-algo}
 \begin{enumerate}
  \item For every $x \in X$, find a closest center $\cl(x)$ (ties broken arbitrarily) and use this information to form clusters $C_1,\ldots,C_k$ by $C_i = \cl^{-1}(c_i)$.

  \item 
  For all $C_i$, compute the \emph{mean}
  $
   \mu(C_i) = \frac{1}{|C_i|}\sum_{x \in C_i} x
  $
  and then replace $C$ by $C_{\text{new}} = \{\mu(C_1),\ldots,\mu(C_k)\}$.
 \end{enumerate}
\end{enumerate}
Notice that this description ignores edge cases like clusters which \lq run empty\rq\ during step 2(a). We also do not discuss variations of other stopping criteria here. 
It is well documented \cite{CelebiKV13,franti2019,pena1999} that the implementation of the first step of Lloyd's algorithm is crucial because bad initial centers can lead to bad local optima. It is also easy to find theoretical worst case examples where the local optimum is arbitrarily bad~\cite{KanungoMNPSW04}.

In 2007, Arthur and Vassilvitskii~\cite{AV07} proposed a method to significantly improve the first step of Lloyd's algorithm, leading to a new de-facto-standard algorithm for the $k$-means problem. The core idea is to choose the initial centers by an adaptive sampling procedure known as 
$d^2$-sampling initialization (also see Algorithm~\ref{d2sampling} below):
\begin{enumerate}
 \item Choose $c_1$ uniformly at random from $X$.
 \item For $i=2,\ldots,k$:
 \begin{itemize}
  \item For all $x \in X$, compute $p(x) := \frac{\min_{c \in C} ||x-c||^2}{\sum_{y\in X} \min_{c \in C} ||y-c||^2}$
  \item Sample a point $c_i$ where every $x \in X$ has probability $p(x)$
 \end{itemize}
  \item Return $C=\{c_1,\ldots,c_k\}$
\end{enumerate}
The algorithm proposed in~\cite{AV07} first computes $C$ by this routine and then runs Lloyd's algorithm with $C$ as the initial center set. This combination of $d^2$-sampling and Lloyd's algorithm is called \emph{$k$-means++}.

Let us consider some background to better understand the advantages of $k$-means++ (and the subsequent improvements): There are two main reasons why a clustering obtained with Lloyd's algorithm may be bad. Firstly, it may be that the underlying structure of the data does not fit the $k$-means objective, e.g., because the points are not well-clusterable with spherical clusters, or because we chose the wrong $k$. In this case, we should choose a different clustering method like kernel $k$-means, hierarchical clustering methods or density based clustering. But secondly, it may be that although optimal clusters with respect to the $k$-means objective are indeed perfect for us, Lloyd's algorithm does not find them because it converges to a local optimum.
Figure~\ref{chamechaude} gives a visual example for this: We see a $k$-means based image compression with four colors (The input pixels are represented by three-dimensional points based on the RGB-values, these points are clustered, and then the color of every pixel is replaced by the mean color of its cluster. Clusters are not contiguous within the picture). In this example, we can see how the difference in $k$-means cost is indeed reflected by the different quality of the compressed image.
\begin{figure}
    \centering
    \begin{subfigure}[b]{0.3\textwidth}
        \centering
        \includegraphics[width=\linewidth]{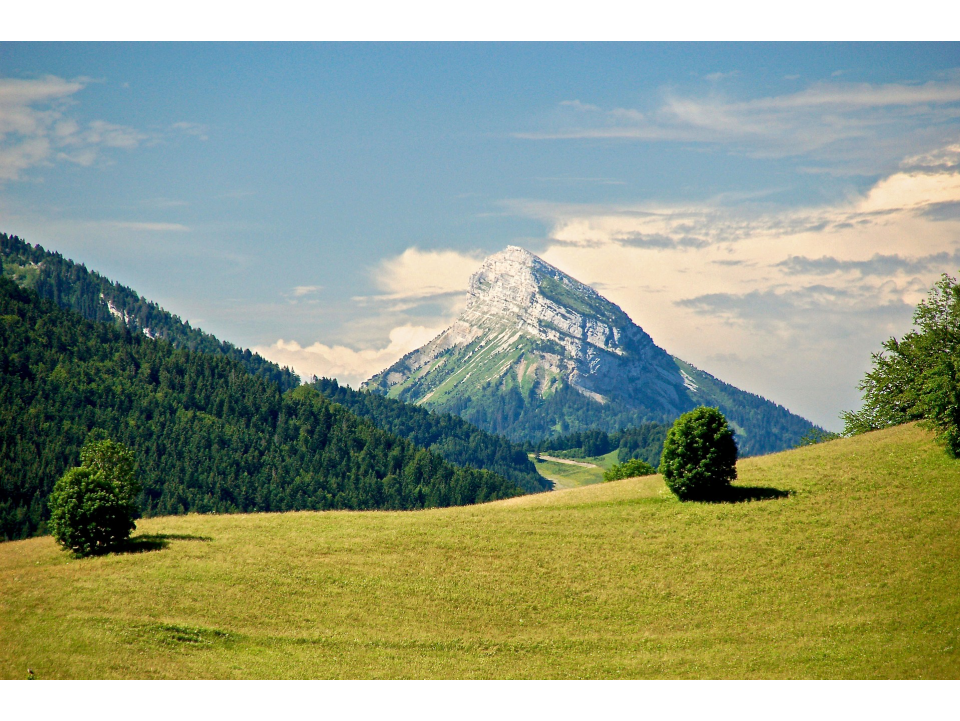}
        \caption{Original image}
    \end{subfigure}
    \hfill
    \begin{subfigure}[b]{0.3\textwidth}
        \centering
        \includegraphics[width=\linewidth]{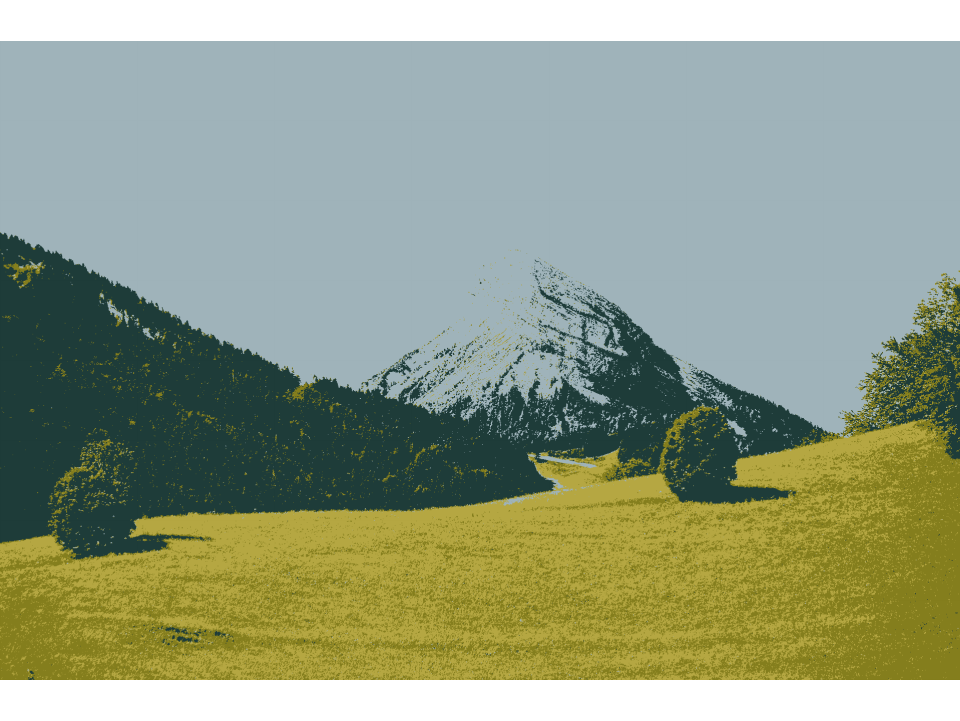}
        \caption{A local optimum}
    \end{subfigure}
    \hfill
    \begin{subfigure}[b]{0.3\textwidth}
        \centering
        \includegraphics[width=\linewidth]{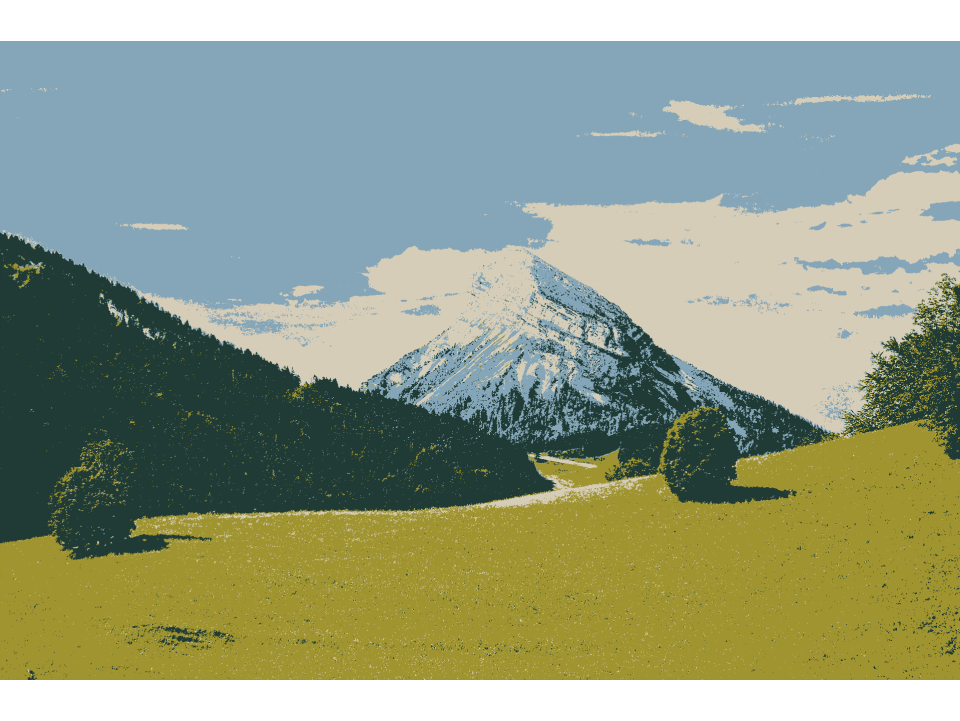}
        \caption{A better solution}
    \end{subfigure}
    \caption{Compression of an image with $k=4$ centers (i.e. colors). Subfigure (a) shows the original image. Subfigure (b) shows a local optimum with a $k$-means cost of $55.18\cdot 10^8$. We found this local optimum in runs of Lloyd with uniform initialization and in single runs of $k$-means++.
    Subfigure (c) shows a solution with a $k$-means cost of $43.09 \cdot 10^8$ (for example found by FLS++).\vspace*{-0.4cm}}
\label{chamechaude}
\end{figure}

Provided that the cost can actually guide us to a good solution, we know that there are hidden optimal clusters which we desire to find, and that we should improve upon the $k$-means cost to find them.
The idea of $k$-means++ is to find one point from each optimal cluster. This point should be \lq relatively\rq{} good, in the sense that it should allow Lloyd's algorithm to find a good center for the optimum cluster later. We call an optimum cluster \emph{covered} when the $d^2$-sampling sampled one of its points. Now, observe that clusters that are \emph{not} covered induce a high cost. By basing the $d^2$-sampling probabilities on the cost of points, the algorithm strives to find good points from uncovered optimum clusters.
Indeed, $k$-means++ succeeds with this goal most of the time, but it can miss some of the optimum clusters -- which then remain uncovered because the subsequent Lloyd steps are (in general) unable to shift a mean point from one optimum cluster to another.
However, $k$-means++ still computes a $\Theta(\log k)$-approximation on expectation, and it improves the practical performance of Lloyd's algorithm vastly.
\begin{figure}
\centering
\begin{subfigure}[t]{0.49\textwidth}
\includegraphics[scale=0.4]{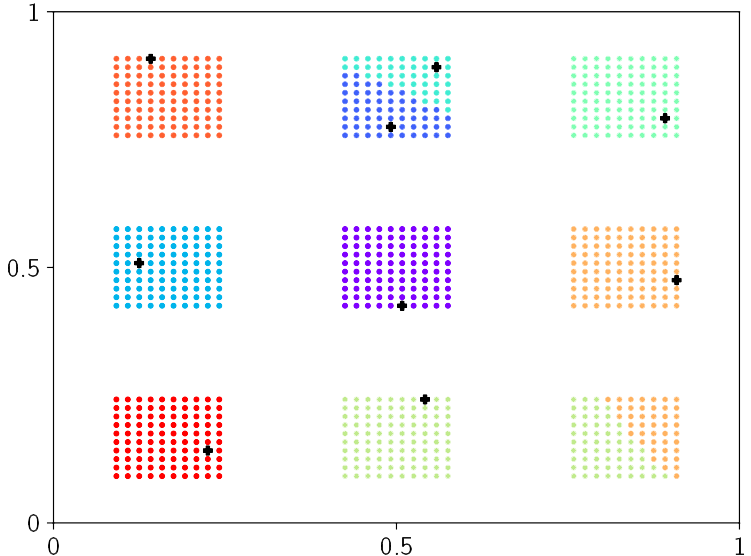}
\caption{A typical $d^2$-sampling initialization\label{d-two-sampling-solution-picture}}
\end{subfigure}\hfill
\begin{subfigure}[t]{0.49\textwidth}
\includegraphics[scale=0.4]{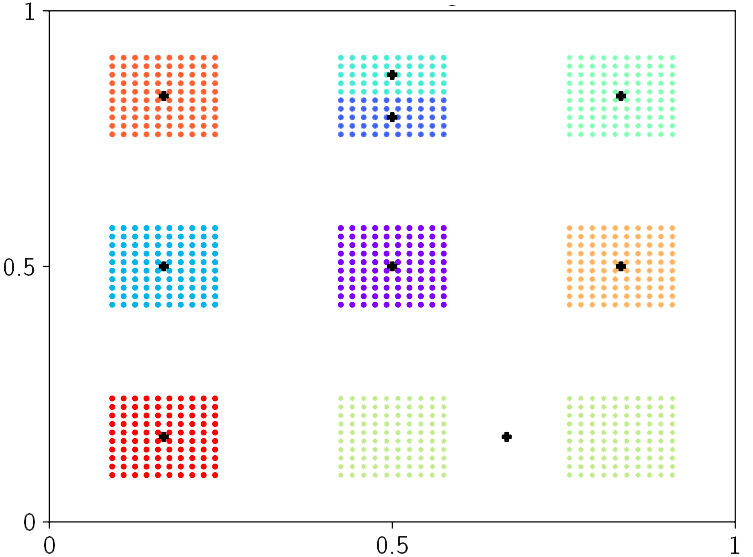}
\caption{Converging Lloyd's algorithm on (a)}
\end{subfigure}
\caption{This data set is by Fritzke \cite{fritzke2017boxes}, the illustration by Conrads \cite{C21}. The left side shows nine centers sampled by one run of $k$-means++ and the corresponding induced clusters are illustrated by colors. The right side shows how the clusters and centers look after running Lloyd's algorithm to convergence with the nine centers from the left as input.\label{fig-fritzke}\vspace*{-0.4cm}}
\end{figure}
Figure~\ref{fig-fritzke} illustrates how $k$-means++ can end up in a local optimum. We see that out of the nine natural clusters in the data set, $k$-means++ has covered eight -- the upper middle cluster was hit by two samples, and thus there was no center left in the end to cover the lower right cluster. Running Lloyd's algorithm on this center set then results in nicely placed centers which capture the natural clusters perfectly in seven of nine cases. However, Lloyd's algorithm cannot repair the mistake of not finding the lower right cluster.

It is now around fifteen years after the development of $k$-means++, and there has been increasing interest in the question of improving upon its theoretical and practical properties. There are two main ways to improve the behaviour explained in Figure~\ref{fig-fritzke}. Option A is to try to avoid missing clusters. It was already proposed in~\cite{AV07} to improve $k$-means++ by adding a greedy procedure: When choosing $c_i$, compute $\ell$ \emph{candidates}, all chosen according to the same probability distribution, and then out of these, pick the point which decreases the current cost the most. This variant of $k$-means++ is called \emph{greedy $k$-means++}. Although it has been observed as performing better in practice in some instances (as in~\cite{AV07} and in~\cite{CelebiKV13}), greedy $k$-means++ actually has a \emph{worse} worst-case behavior than $k$-means++. Bhattacharya et al. \cite{BERM19} give a family of point sets where the expected quality of solutions computed by greedy $k$-means++ is lower bounded by $\Omega(\ell \log k)$. This is because points which look beneficial to the greedy procedure may actually be very bad centers, for instance, if points lie \lq on the boundary\rq\ between clusters. If two clusters are uncovered and there is a point in the middle, then choosing this point decreases the overall cost more than any point closer to the mean of the two clusters. However, a point in the middle of two clusters and far away from their means is certainly a bad choice for a cluster center.
Recently, \cite{grunau2023} showed a better lower bound of $\Omega(\ell^3 \log^3 k /\log^2(\ell \log k))$ while also giving an upper bound of $\bigO(\ell^3 \log^3 k)$.
Still, greedy $k$-means++ is for example the default initialization method in Python's scikit learn package~\cite{scikit-learn} and
our experiments confirm that this is a justified choice.

\begin{figure}
	\begin{subfigure}[t]{0.475\textwidth}
	\centering
	\includegraphics[width=0.6\linewidth,page=1]{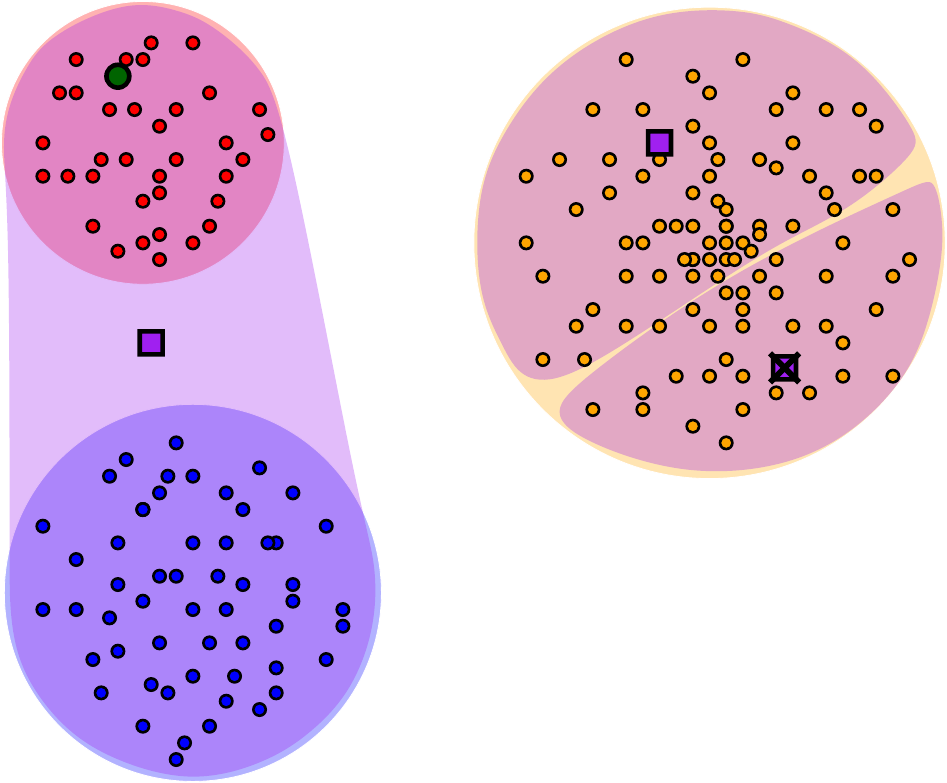}
	\caption{Initial converged clustering (without the circle-shaped green point).
	The circle-shaped green point is sampled, the crossed out center is removed.
	}
	\end{subfigure} \hfill
	\begin{subfigure}[t]{0.475\textwidth}
        \centering
        \includegraphics[width=0.6\linewidth,page=2]{figures/heuristic_clustercost_motivation2.pdf}
        \caption{After performing the center swap, Lloyd's iterations will converge to the optimal clustering.}
    \end{subfigure}%
    \caption{Improving solutions with local search steps.\label{fig:lsimprovement}\vspace*{-0.4cm}}
\end{figure}
Another option is to perform improvement steps on the $k$-means++ solution before running Lloyd's algorithm. A prominent technique in the theoretical analysis of the $k$-means problem (and other clustering problems) is to do local search by swapping in and out centers. A solution set $C$ is improved by taking out a constant number $t$ of centers and replacing them by $t$ points from $X$. \cite{KanungoMNPSW04} show that this approach with sufficiently large $t$ leads to a $(9+\epsilon)$-approximation for the $k$-means problem. However, the approach is fairly slow since even for $t=1$, checking if there is an improving swap takes a lot of time. Lattanzi and Sohler \cite{LS19} combine local search for $t=1$ with adaptive sampling. Instead of checking all $x \in X$ for improving swaps, they (multiple times) sample a point from $X$ by adaptive sampling and check if exchanging it for a center in $C$ improves the solution. They show that starting from an initial $k$-means++ solution, this yields a $\bigO(1)$-approximation after $\bigO(k \log\log k)$ steps. The resulting algorithm is called LS++. The analysis has since been improved by Choo et al. in~\cite{choo2020}, where it is shown that, for any $\varepsilon > 0$, performing $\varepsilon k$ steps yields a $\bigO (1/\varepsilon^3)$-approximation. Figure~\ref{fig:lsimprovement} shows an idealized visualization of how improvement steps shall work: In a solution where one optimum cluster got covered twice and other clusters have been clustered together, one wants to find a point from an uncovered optimal cluster with $d^2$-sampling and then converge to an optimum solution with Lloyd's algorithm.

Our aim in this paper is to improve the practical quality of both greedy $k$-means++ and LS++. We start with LS++, but our idea is to \emph{combine} it with the strength of Lloyd's algorithm instead of simply running local improvement steps ahead of it.
We propose \emph{Foresight-LS++ (FLS++)} combining steps of Lloyd's algorithm and local search:
When exchanging centers, it is not at all clear which centers should be swapped because we do not know the effect of the swap when Lloyd's algorithm runs later. Does a particular swap really enable Lloyd's algorithm to find good centers? The best way to answer this question is to actually perform one step of Lloyd's algorithm for a potential swap. This way, instead of comparing the cost of the solution before and after the swap directly, we add some \emph{foresight}: We perform one step of Lloyd's algorithm and see how the solution develops depending on the swaps. 
The candidates for new centers are also found with $d^2$-sampling, so the centers we sample follow the same distribution. 
\begin{wrapfigure}{r}{6.5cm}
 \centering
 \includegraphics[width=0.45\textwidth]{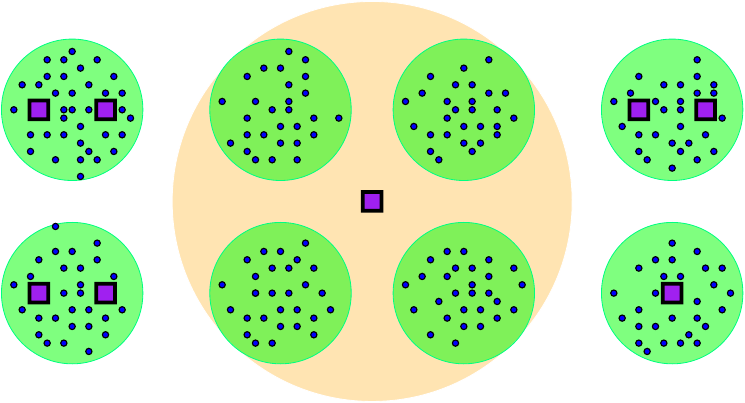}
 \caption{A example with eight optimal clusters (green). When swapping in a new center, the best center to delete is the one in the middle: without it, Lloyd's algorithm can repair the solution.
 \label{fig:other-case}}
\end{wrapfigure}
With our procedure, 
we may avoid putting in unfavorable centers. However, the biggest impact of our change is finding the best center to \emph{remove} in a more informed way. 
More precisely, some of the swaps take out superfluous centers in clusters that were hit multiple times. However, there are also beneficial swaps of a different type: Here, we swap out a center that is actually the only one covering a cluster, just to better distribute the centers in that area. Figure~\ref{fig:other-case} shows a schematic illustration of situations where this can be indeed helpful. By running one step of Lloyd's algorithm, FLS++ can identify both (and potentially more) types of beneficial swaps.

\subparagraph*{Further related work}
The $k$-means problem is NP-hard even for $k=2$~\cite{dasgupta2008hardness} and $d=2$~\cite{mahajan2009planar}, 
but there exist different constant-factor approximation algorithms.
The currently best approximation ratio is $6.357$, achieved by Ahmadian et al.~\cite{AhmadianNSW20} with an LP rounding based approach. Prior to their work, the best known approximation ratio of $9+\epsilon$ was achieved via a local search approach analyzed by Kanungo et al.~\cite{KanungoMNPSW04}. Awasthi et al.~\cite{AwasthiCKS15} show that for general $k$ and $d$ there exists some $c > 0$ such that it is NP-hard to approximate $k$-means with factor $c$. The constant $c$ is at least $1.0013$ as shown by Lee et al.~\cite{LeeSW17}. 
The problem gets easier when \emph{bicriterial} solutions are allowed, i.e., when we may select more than $k$ centers and still compare to the cost of a solution with $k$ centers. There are multiple bicriterial approximations, and Wei~\cite{NIPS2016_357a6fdf} indeed shows that for any constant $\beta > 1$, selecting $\beta k$ centers with $D^2$-sampling gives a constant factor approximation (where the constant depends on $\beta$).

There is a vast literature on Lloyd's algorithm and various speed-up techniques. A good survey on initialization methods is due to Celebi et al.~\cite{CelebiKV13} who give an extensive evaluation of different heuristics to initialize Lloyd's algorithm, including $k$-means++ and a greedy variant of it. LS++ was developed after that survey, so it is not included in the comparison. Examples for speed-up techniques are the works by Elkan~\cite{Elkan03} and Hamerly~\cite{Hamerly12}. These speed up the execution of Lloyd's algorithm but do not change the outcome. 

\subparagraph*{Notation} 
The distance of a point $p$ to a set of centers $C$ is $d(p,C) = \min_{c\in C} ||p-c||$, and for a set of points $P$ we define the distance to a set of centers as $d(P,C) = \sum_{p\in P} d(p,C) = \sum_{p\in P} \min_{c\in C}||p-c||$. We refer to the \emph{cost} of $P$ with respect to center set $C$ as $\Phi(P,C) = d^2(P,C) = \sum_{p \in P} d^2(p,C) = \sum_{p\in P} \min_{c\in C}||p-c||^2$. 
For an arbitrary assignment $\cl':P\rightarrow C$ we define $\Phi(P,C,\cl')=\sum_{p\in P}d^2(p,\cl'(p))$ as the cost of a clustering $C$ using assignment $\cl'$.

\section{Algorithms}

We first describe the $k$-means++ algorithm. It is stated in Algorithm~\ref{kmeans++} and explained below. 
\begin{minipage}[t]{7.3cm}
  \vspace{0pt}  
\begin{algorithm}[H]
 \KwIn{Point set $P\subset \mathbb{R}^d$, number $k\in \mathbb{N}$}
 \KwOut{Center set $C\subset \mathbb{R}^d$}
 \SetKwFunction{samplecenter}{SampleCenter}
 \BlankLine
 \everypar={\nl}
 $C = \emptyset$\\
    \For {$i=1$ \textbf{to} $k$}{
        $c = \samplecenter (P,C)$\\
        $C = C\cup \{c\}$
    }
    \Return $C$
    \caption{$d^2$-\texttt{sampling-init}}
    \label{d2sampling}
\end{algorithm}
\end{minipage}%
\hspace*{0.1cm}
\begin{minipage}[t]{6cm}
  \vspace{0pt}
\begin{algorithm}[H]
 \KwIn{Point set $P\subset \mathbb{R}^d$, numbers $k,s\in \mathbb{N}$}
 \KwOut{Center set $C\subset \mathbb{R}^d$}
 \SetKwFunction{samplecenter}{$d^2$-\texttt{sampling-init}}
 \SetKwFunction{lloyd}{Lloyd}
 \BlankLine
 \everypar={\nl}
 $C = \samplecenter(P,k)$\\
 $\llo = \lloyd(P,C,s)$\\
    \Return $C$
    \caption{$k$-means++}
    \label{kmeans++}
\end{algorithm}
\end{minipage}\\
The $k$-means++ algorithm consists of sampling $k$ initial centers with $d^2$-sampling and then heuristically refining the solution by Lloyd's method. For ease of notation in Algorithm~\ref{algo_fls++}, the method \texttt{LLoyd}$(P,C,s)$ also outputs the optimal assignment of the points to the centers before the final centroids are computed, see the description of Lloyd's algorithm at the beginning of the introduction.
For the sampling part, we assume that we have access to a function $\texttt{SampleCenter}(P,C)$ which, given a point set $P$ and center set $C$, returns a point $p$ from $P$ that is sampled with probability $d^2(p,C)/d^2(P,C)$ (unless $C = \emptyset$, in which case the probability is $1/|P|$). This is the original $k$-means++ sampling, also referred to as $d^2$-sampling. Notice that a sample can be obtained in time $O(n)$ if all distances $d^2(p,C)$ are already known: First, add $d^2(p,C)$ for all $p\in P$ to obtain the sum $S$. Second, draw a uniform sample $r$ from $[0,1]$. Third, iterate through $P$ and add up the distances again until the sum exceeds $r \cdot S$ for the first time; take the point before the point for which this happened. It is important that one needs all values of $d^2(p,C)$ to do so; one can compute them in time $\bigO(ndk)$ when needed but it is beneficial to store values.
Repeating $\texttt{SampleCenter}(P,C)$ for $k$ iterations yields the initialization part of $k$-means++, see Algorithm~\ref{d2sampling}. 
We also assume that we have access to a subroutine for Lloyd's method. Given a set $P$ of input points, a set $C$ of initial centers and a maximum number of steps $s$, the method $\texttt{Lloyd}(P,C,s)$ runs the loop in Step 2 of Lloyd's algorithm as described on page~\pageref{lloyds-algo} to compute and return a set $C_{\text{out}}$ of centers. We define $\Phi(\texttt{Lloyd}(P,C,s)) = \Phi(P,C_{\text{out}})$. 
Given these subroutines, $k$-means++ can be described by the pseudo code in Algorithm~\ref{kmeans++}.

\subparagraph*{LS++ and FLS++}
Algorithm~\ref{FLS++} shows the pseudo code of FLS++. Without Lines 2, 5 and 7 and with 
directly comparing $\Phi(P,(C\setminus\{c_{\text{old}}\})\cup \{c_{\text{new}}\})$ to $\Phi(P,C^{\text{min}})$ in Line 8, the code describes LS++. The core idea of LS++ is to further improve the initial solution by local search steps before running Lloyd's algorithm.  LS++ first samples $k$ centers with $\texttt{d}^2$-$\texttt{sampling-init}(P,k)$. Then it does the following $Z$ times: Pick another point $c_{\text{new}}$ with $d^2$-sampling, then find the best way to swap out a center $c_{\text{old}}$ and replace it with $c_{\text{new}}$. The parameter $Z$ is assumed to be at least $100000 k \log \log k$ in the theoretical analysis in \cite{LS19}, but set to values $\le 25$ in the practical evaluation~\cite{LS19}. After $Z$ improvement steps, LS++ calls Lloyd's algorithm.
LS++ does indeed improve the solution quality. But the strength in the original $k$-means++ algorithm lies in the fact that it uses the power of Lloyd's to quickly do local refinements: $d^2$-sampling itself would not be competitive, it is the combination of finding clusters and refining the centers that makes $k$-means++ such a success in practical applications. Consequently, we believe that increasing the symbiosis of sampling and Lloyd's algorithm is the key to improving LS++.

The additional lines of FLS++ call Lloyd's algorithm for one iteration, which means that all centers are replaced by the centroids of their clusters. Line 2 uses this to refine the initial sampling solution, Line 5 makes sure that we always keep centroids (instead of mixing in input points into the center set), and Line 7 is the step that gave the algorithm the prefix \emph{foresight}: Instead of checking the cost of $(C \backslash\{c_{\text{old}}\}) \cup \{c_{\text{new}}\}$, we run a Lloyd's step to evaluate how the solution cost will develop after updating the centers.
\begin{wrapfigure}{l}{7.2cm}
\centering
\begin{minipage}[t]{7cm}\centering
\begin{algorithm}[H]
  \KwIn{Point set $P\subset \mathbb{R}^d$, numbers $k,s\in \mathbb{N}$}
  \Parameter{$Z\in \mathbb{N}_{0}$}
  \KwOut{Center set $C\subset \mathbb{R}^d$}
  \SetKwFunction{sampling}{$d^2$-sampling}
  \SetKwFunction{samplecenter}{SampleCenter}
  \SetKwFunction{lloyd}{Lloyd}
  \BlankLine
  \everypar={\nl}
  $C = \sampling (P,k)$\\
  $\llo = \texttt{Lloyd}(P,C,1)$\\
  \For{$i=1$ \KwTo $Z$}{
  $c_{\text{new}} = \samplecenter (P,C)$\\
  $C^{\min}, \cl^{\text{min}} = \texttt{Lloyd}(P,C,1)$\\
    \For{$c_{\text{old}} \in C$}{
        $C', \cl' = \lloyd (P,(C\setminus\{c_{\text{old}}\})\cup \{c_{\text{new}}\},1)$\\

        \If{$\Phi(P,C',\cl') < \Phi(P,C^{\min}, \cl^{\textnormal{min}}) $}{
            $C^{\min}, \cl^{\min} = C',\cl'$      
        }
    }
    $C = C^{\min}$
  }
  $C, \cl=\lloyd (P,C,s)$\\
  \Return $C$
\caption{\texttt{Foresight-LS++}}\label{FLS++}
\label{algo_fls++}
\end{algorithm}
\end{minipage}\vspace*{-0.5cm}
\end{wrapfigure}
This makes it much easier for FLS++ to identify swaps that remove a superfluous center like the extra center in the upper cluster in Figure~\ref{d-two-sampling-solution-picture}.  

One can view FLS++ either as a modified LS++ algorithm or as a modified Lloyd's algorithm. Viewed as a modified Lloyd's algorithm one can think of FLS++ as performing Lloyd's algorithm but doing a center swap in every iteration to avoid running into local optima. Viewed as a LS++ modification, one can think of it as running Lloyd's algorithm for $1$ step between the local search improvement steps.
Why do we run exactly one step of Lloyd's algorithm? The reason is that this can be done very efficiently. Running multiple Lloyd's steps for \emph{each} $c_{\text{old}}$ would incur a running time of $O(ndk^2)$ for \emph{each} local search step. But recomputing the centroids can be done cleverly for all points together such that one local seach step in Algorithm~\ref{FLS++} takes $\bigO(ndk)$ just as for LS++. 

\begin{lemma}\label{lem:fasterruntime}
 One iteration of the For-Loop in Lines $3$-$13$ of Algorithm~\ref{FLS++} can be implemented such to run in time $\bigO(ndk)$.
\end{lemma}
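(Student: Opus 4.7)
The plan is to show that a single iteration of the for-loop in Lines~3--13 can be implemented in $\bigO(ndk)$ by avoiding a separate black-box \texttt{Lloyd} call for each candidate $c_{\text{old}}$. The \textbf{main obstacle} is Line~7: invoking \texttt{Lloyd} from scratch on the $k$-element center set $(C\setminus\{c_{\text{old}}\})\cup\{c_{\text{new}}\}$ costs $\bigO(ndk)$ because assigning each point requires scanning all $k$ centers; summed over the $k$ choices of $c_{\text{old}}$ this gives $\bigO(ndk^2)$, missing the target by a factor of $k$.

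To bypass this, I would maintain, for each point $p\in P$, the indices of its closest and second-closest centers $a(p),b(p)\in C$ together with the squared distances $d^2(p,a(p))$ and $d^2(p,b(p))$. These four arrays can be produced in $\bigO(ndk)$ by scanning, for each $p$, the $k$ squared distances to $C$ once and retaining the two smallest. Since $d^2(p,a(p))=d^2(p,C)$, the $d^2$-sampling of $c_{\text{new}}$ in Line~4 then costs only $\bigO(n)$, and computing the squared distances $d^2(p,c_{\text{new}})$ for all $p$ takes $\bigO(nd)$ more. Line~5 is similarly cheap: the pre-centroid assignment is exactly $a(\cdot)$, so a single pass over $P$ accumulates the cardinality, coordinate-sum, and squared-norm sum of each of the $k$ clusters, yielding the centroids $C^{\min}$ and the cost $\Phi(P,C^{\min},\cl^{\min})$ in $\bigO(nd)$.

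The crucial step is the inner for-loop starting at Line~6. With the auxiliary data above, each candidate $c_{\text{old}}\in C$ is handled in $\bigO(nd)$. One sweep over $P$ determines $\cl'$ by a constant-time case analysis per point: if $a(p)\neq c_{\text{old}}$ then $p$ is sent to whichever of $a(p)$ and $c_{\text{new}}$ is closer to it, while if $a(p)=c_{\text{old}}$ it is sent to whichever of $b(p)$ and $c_{\text{new}}$ is closer. The same pass simultaneously aggregates the size and coordinate-sum of each of the $k$ new clusters, so the centroids $C'$ are obtained after $\bigO(kd)$ divisions, for a total of $\bigO(nd+kd)=\bigO(nd)$; a final sweep evaluates $\Phi(P,C',\cl')=\sum_p\|p-\cl'(p)\|^2$ in $\bigO(nd)$. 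Summing over the $k$ candidates yields $\bigO(ndk)$ for the inner loop.

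After Line~11 replaces $C$ by the best $C^{\min}$, the arrays $a(p),b(p)$ and their distances must be refreshed for the new center set at a further $\bigO(ndk)$. Adding all contributions --- $\bigO(ndk)$ preprocessing, $\bigO(nd)$ for Lines~4--5, $\bigO(ndk)$ for the inner loop, and $\bigO(ndk)$ for the refresh --- gives the claimed $\bigO(ndk)$ bound for one iteration.
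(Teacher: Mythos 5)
Your proposal is correct and follows essentially the same strategy as the paper's proof: precompute each point's closest and second-closest center once per iteration, use a constant-time case analysis (on whether $a(p)=c_{\text{old}}$) to evaluate each candidate swap in $\bigO(nd)$, and defer the full reassignment to the single chosen $C^{\min}$. Your write-up is in fact slightly more explicit than the paper's (which appears to contain a typo, conditioning on $\cl_1(p)\neq c_{\text{new}}$ where $c_{\text{old}}$ is meant), but the underlying argument is identical.
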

\begin{proof}
First we compute the distances of every point $p$ to its closest center $\cl_1(p)$ and second-closest center $\cl_2(p)$ in time $\bigO(ndk)$.
With this information, we can sample a point $c_{\text{new}}$ with $d^2$-sampling in time $\bigO(n)$.
Next, we compute the distances of all points to the sample point $c_{\text{new}}$. Then we check all $k$ current centers whether we want to swap it out. For one candidate $c_{\text{old}}$, we run through all $n$ points $p$, and determine which is its closest center. We can do this in time $\bigO(1)$ in the following way: if $\cl_1(p)\neq c_{\text{new}}$, then we compare $d(p,\cl_1(p))$ with $d(p,c_{\text{new}})$.
If $\cl_1(p) = c_{\text{new}}$, then we compare $d(p,\cl_2(p))$ with $d(p,c_{\text{new}})$.
After doing this for all $c_{\text{old}}$ and all $p$, we have determined the cluster membership of all points in time $\bigO(kn)$. Finally, we want to recompute the centroids and costs of all clusters which we can do in time $\bigO(nd)$.
It is important to note that we do not recompute the reassignment of all points to the new set of centers as this would increase the overall running time $O(ndk^2)$. This is delayed until the next iteration and thus only done for the chosen $C^{\min}$. 
\end{proof}

The approximation guarantee is also not affected.

\begin{corollary}
    FLS++ with $Z\ge 100000 k \log \log k$ computes an $O(1)$-approximation. 
\end{corollary}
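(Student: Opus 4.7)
The plan is to reduce the corollary to the LS++ approximation guarantee of Lattanzi and Sohler~\cite{LS19} by exploiting monotonicity of Lloyd's algorithm. The key fact I would use is that a single Lloyd step cannot increase the $k$-means cost, i.e., $\Phi(P,\texttt{Lloyd}(P,C_0,1)) \le \Phi(P,C_0)$ for every center set $C_0$, since the reassignment step is optimal for fixed centers and the centroid update is optimal for a fixed assignment.

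Given this, I would compare one iteration of FLS++ to one iteration of LS++ started from the same current center set $C$. Both algorithms sample $c_{\text{new}}$ from the same $d^2$-distribution with respect to $C$ and scan the same set of candidate centers $c_{\text{old}} \in C$. For every candidate swap, FLS++ evaluates $\Phi(P,\texttt{Lloyd}(P,(C\setminus\{c_{\text{old}}\})\cup\{c_{\text{new}}\},1))$ while LS++ evaluates $\Phi(P,(C\setminus\{c_{\text{old}}\})\cup\{c_{\text{new}}\})$. By monotonicity, the former is always at most the latter, so the swap FLS++ selects yields a center set whose cost is no larger than the one LS++ would select.

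Now I would invoke the LS++ analysis, which establishes a one-step recursion of the form $\mathbb{E}[\Phi(P,C_{\text{next}})\mid C] \le (1-\Omega(1/k))\,\Phi(P,C) + O(1)\,\Phi(P,\mathrm{OPT})$ that is valid for any current center set $C$. The per-iteration comparison above immediately transfers this recursion to FLS++. Combined with the $O(\log k)$-approximation bound for the initial $d^2$-sampling in Line~1 (which is only improved by the Lloyd step in Line~2) and iterated $Z \ge 100000\,k\log\log k$ times, this yields $\mathbb{E}[\Phi(P,C_Z)] = O(1)\cdot\Phi(P,\mathrm{OPT})$ at the end of the outer loop. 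The final Lloyd iterations in Line~12 only decrease cost, so the bound is preserved.

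The main obstacle I anticipate is verifying that the one-step contraction in the LS++ analysis holds for arbitrary starting center sets and is not specific to sets obtained by the LS++ sampling–swap process, since the intermediate iterates of FLS++ arise from swaps interleaved with Lloyd refinements rather than from the LS++ trajectory. A careful reading of Lattanzi–Sohler shows that the per-iteration bound depends only on the clustering currently induced by $C$ and not on how $C$ was produced, so the reduction goes through without modification.
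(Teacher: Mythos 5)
Your proposal is correct and follows essentially the same route as the paper's own proof: both arguments rest on the two observations that a Lloyd step never increases the cost (so the foresight evaluation of each candidate swap, and the refined iterate, are only better than their LS++ counterparts) and that the per-iteration improvement lemma of Lattanzi and Sohler (their Lemma~3) holds for an arbitrary current center set, so interleaving Lloyd refinements does not invalidate the recursion. Your write-up is merely more explicit about the candidate-wise domination and the transfer of the one-step contraction, which the paper compresses into the remark that Lemma~3 still holds because Lloyd's steps are always improving.
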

\begin{proof}
 \cite{LS19} shows that for any $C$ with cost larger than $500 \cdot OPT$ and a $c_{\text{new}}$ sampled by $d^2$-sampling, there is a $c_{\text{old}}$ such that the cost of $(C\backslash\{c_{\text{old}}\})\cup \{c_{\text{new}}\}$ compared to the cost of $C$ is smaller than that of $C$ by a fraction of $(1/100k)$ with probability $1/1000$ (Lemma 3 in \cite{LS19}). The analysis of the algorithm follows from this fact. Lloyd's steps are always improving, since the centroid is always the best center for a cluster. Thus Lemma 3 in \cite{LS19} still holds and thus does the approximation guarantee.
\end{proof}

\section{Experimental Results}\label{results}

\subparagraph*{Setup.} 
The computations were performed on an Intel(R) Xeon(R) E3-1240 running at 3.7Ghz and 8 cores.
The code is written in C++.
The code and datasets used can be found at \url{https://github.com/lukasdrexler/flspp_code}. 
In the following we abbreviate $k$-means++ with KM++. We refer to the greedy variant of KM++ as GKM++ (analogously for (G)LS++ and (G)FLS++).

\subparagraph*{Datasets.} Our experiments are based on datasets used in~\cite{LS19}, image RGB-data used in~\cite{FS06}, some datasets where the optimal solution is known for specific values of $k$, and some new synthetically generated datasets similar to \emph{rectangles}.

\scalebox{0.725}{
\centering
\begin{tabular}{|c|c|c|c|}
\hline
dataset & number of points & dimension $d$ & source \\
\hline \hline
\textit{rectangles} & $1296$ & $2$ & \cite{FS18}\\
\hline
\textit{circles} & $10000$ & $2$ & -\\
\hline
\textit{close circles} & $10000$ & $2$ & -\\
\hline
\textit{pr91} & $2392$ & $2$ &  \cite{PR91}\\
\hline
\textit{D31} & $3100$ & $2$ & \cite{FS18}\\
\hline
\textit{s3} & $5000$ & $2$ & \cite{FS18}\\
\hline
\textit{A3} & $7500$ & $2$ & \cite{FS18} \\
\hline
\end{tabular}}\scalebox{0.725}{
\begin{tabular}{|c|c|c|c|}
\hline
dataset & number of points & dimension $d$ & source \\
\hline \hline
\textit{Tower} & $4915200$ & $3$ & \cite{FS06} \\
\hline 
\textit{Clegg} & $716320$ & $3$ & \cite{FS06} \\
\hline
\textit{Frymire} & $1235390$ & $3$ & \cite{FS06} \\
\hline
\textit{Body measurements} & $507$ & $5$ (reduced) & \cite{HPJK03} \\
\hline
\textit{Concrete strength} & $1030$ & $9$ & \cite{YEH19981797} \\
\hline
\textit{KDD Bio Train} & $145751$ & $74$ & \cite{KDD-2004} \\
\hline
\textit{KDD Phy Test} & $100000$ & $78$ & \cite{KDD-2004} \\
\hline
\end{tabular}}\\
For some datasets, the choice of $k$ is obvious, as there exists a clear ground truth clustering: \emph{rectangles} from Figure~\ref{fig-fritzke} consist of $k=36$ clusters;
likewise, the two generated datasets \textit{circles} and \textit{close circles} consist of $k=100$ separated clusters as visualized in Figure~\ref{fig:circle_circle} in the Appendix.
For other datasets, we evaluated several values of $k$: e.g. on \emph{pr91} where the optimum $k$-means cost is known for several values of $k$ (see~\cite{aloise2012}).

\subparagraph*{Overall performance on big data sets.}
To get an idea of the algorithms' strengths and weaknesses, we compare the results of all four algorithms on three large data sets, \emph{KDD Bio Train}, \emph{KDD Phy Test}, and \emph{Tower}. 
In Figure~\ref{fig:boxplot}, we compare KM++, GKM++, GLS++ with 25 local search steps (which is the number chosen in~\cite{LS19}) and GFLS++ with 5, 10, and 15 steps. Since GFLS++ performs one Lloyd iteration in every local search step, 25 such steps take longer than 25 steps in GLS++. Hence, we would like to compare how GFLS++ performs when using \emph{fewer} local search steps. We take a more detailed look at the trade-off between runtime and cost further below.
We perform $50$ \emph{runs} of each configuration, i.e., each algorithm is called $50$ times with a specific $k$ on each dataset. The initial centers for GLS++ and GFLS++ are chosen by greedy $d^2$-sampling. We execute  GKM++, GLS++ and GFLS++ with the same initial set of centers.

As is to be expected, a larger value of $Z$ increases runtime but decreases cost. On both \emph{Phy Test} and \emph{Tower}, GFLS++ with $Z=15$ obtains the overall minimum cost at the expense of a longer runtime, while outperforming GLS++ slightly at $Z=10$ in terms of cost \emph{and} runtime.
Table~\ref{cost_comp_large} shows the maximum, average and minimum cost that the algorithms produced on each data set with $k=100$. While the overall differences are not too big, GFLS++ consistently obtains the lowest values.
\begin{figure}[t]
  \centering
        \includegraphics[width=0.9\textwidth]{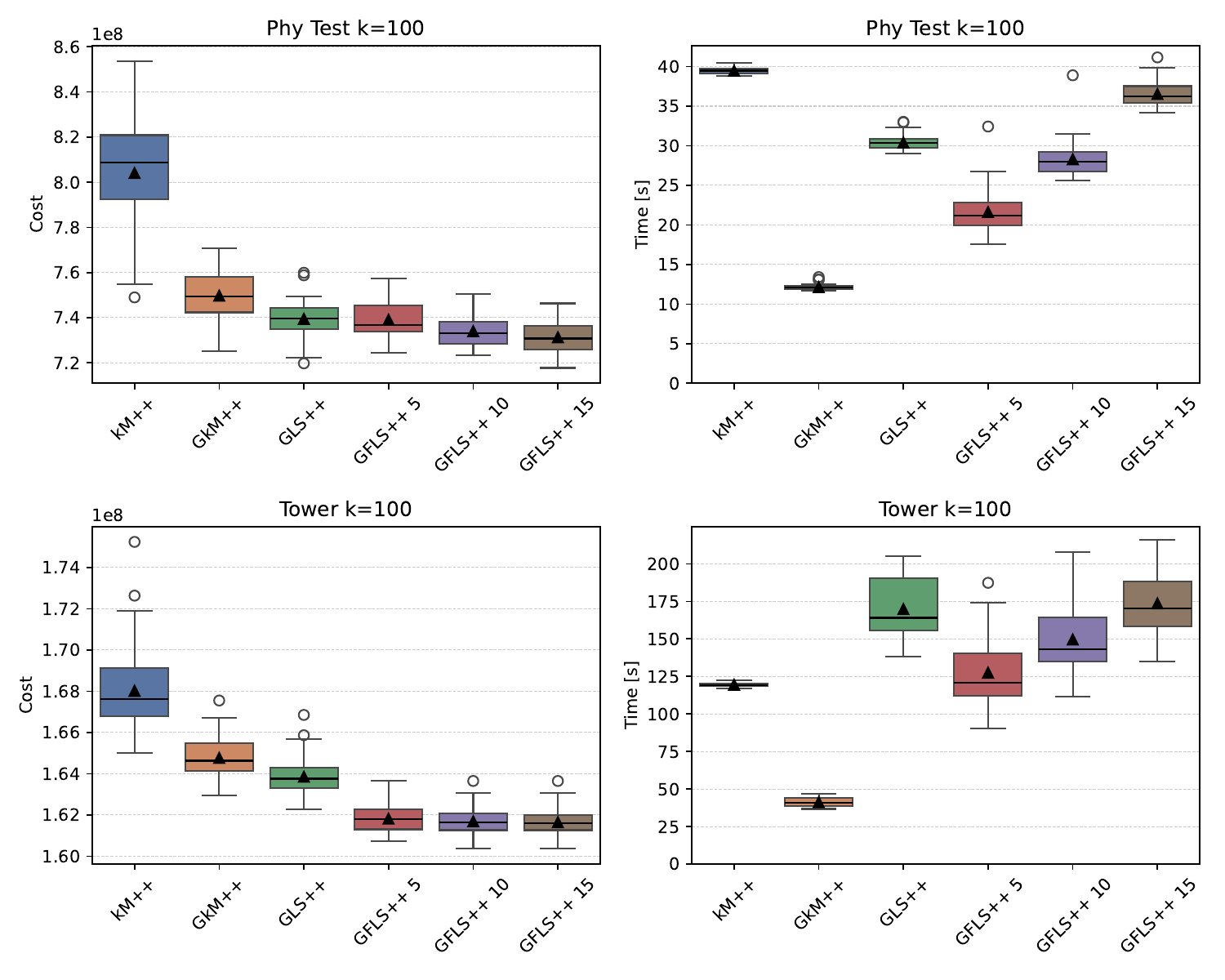}
    \caption{Comparison on two large datasets, for $R=50$ repetitions. GLS++ always performs 25 local search steps. For GFLS++, we display the results for performing 5, 10 and 15 such steps.}
\label{fig:boxplot}
\end{figure}
\begin{table}[b]
\centering 
\scalebox{0.85}{
    \begin{tabular}{lccccccc}
        \toprule
        \multirow{3}{*}{Datasets} & \multirow{3}{*}{} & \multicolumn{6}{c}{Algorithms} \\
        \cmidrule(lr){3-8}
        & & KM++ & GKM++ & LS++ & FLS++ 5 & FLS++ 10 & FLS++ 15 \\
        \midrule
        \multirow{3}{*}{Bio Train} & \textbf{Min} & 1.6292E+11 & 1.6258E+11 & 1.6244E+11 & 1.6116E+11 & 1.6116E+11 & \textbf{1.6110E+11} \\
                                   & \textbf{Mean} & 1.6487E+11 & 1.6311E+11 & 1.6302E+11 & 1.6162E+11 & 1.6162E+11 & \textbf{1.6160E+11} \\
                                   & \textbf{Max} & 1.6937E+11 & 1.6379E+11 & 1.6357E+11 & 1.6240E+11 & \textbf{1.6219E+11} & \textbf{1.6219E+11} \\
        \midrule
        \multirow{3}{*}{Phy Test} & \textbf{Min} & 7.4898E+08 & 7.2525E+08 & 7.1975E+08 & 7.2427E+08 & 7.2348E+08 & \textbf{7.1773E+08} \\
                                   & \textbf{Mean}& 8.0406E+08 & 7.4972E+08 & 7.3942E+08 & 7.3905E+08 & 7.3400E+08 & \textbf{7.3128E+08} \\
                                   & \textbf{Max} & 8.5367E+08 & 7.7058E+08 & 7.5988E+08 & 7.5721E+08 & 7.5032E+08 & \textbf{7.4630E+08} \\
        \midrule
        \multirow{3}{*}{Tower} & \textbf{Min} & 1.6500E+08 & 1.6296E+08 & 1.6228E+08 & 1.6073E+08 & \textbf{1.6036E+08} & \textbf{1.6036E+08}\\
                                   & \textbf{Mean} & 1.6802E+08 & 1.6477E+08 & 1.6385E+08 & 1.6182E+08 & 1.6169E+08 & \textbf{1.6164E+08} \\
                                   & \textbf{Max} & 1.7524E+08 & 1.6755E+08 & 1.6685E+08 & \textbf{1.6365E+08} & \textbf{1.6365E+08} & \textbf{1.6365E+08} \\
        \bottomrule
    \end{tabular}
}
\caption{Average cost comparison on large data sets with $k=100$ and $50$ runs.}
\label{cost_comp_large}
\end{table}
Runtimes and performance comparisons for different values of $k$ and for the \emph{KDD Bio Train} dataset can be seen in Section~\ref{sec:remaining_figures} in the appendix.

\subparagraph*{Best of repeated runs within a time limit.}\label{sec:fancy_compare}
\begin{table}[t]
\centering
\scalebox{.9625}{
\begin{adjustbox}{width=\textwidth}
\begin{tabular}{|c||Sc|Sc|Sc|c|c|}
\hline
data set & $c_{\text{KM++}}$ & $c_{\text{LS++}}$ & $c_{\text{FLS++}}$ & $C(\text{FLS++}, \text{KM++})$ & $C(\text{FLS++}, \text{LS++})$\\
\hline \hline
\textit{ pr91 } & $ 9.5637\text{{E+}}08 $ & $ 9.5276\text{{E+}}08 $ & $ 9.4696\text{{E+}}08 $ & $ 0.98\% $ & $ 0.61\% $ \\      \hline
\textit{ bio train features } & $ 2.3918\text{{E+}}11 $ & $ 2.3885\text{{E+}}11 $ & $ 2.3864\text{{E+}}11 $ & $ 0.23\% $ & $ 0.09\% $ \\      \hline
\textit{ concrete } & $ 3.2724\text{{E+}}06 $ & $ 3.1775\text{{E+}}06 $ & $ 3.1308\text{{E+}}06 $ & $ 4.33\% $ & $ 1.47\% $ \\      \hline
\textit{ circles } & $ 3.1913\text{{E+}}05 $ & $ 2.7164\text{{E+}}05 $ & $ 2.4773\text{{E+}}05 $ & $ 22.37\% $ & $ 8.8\% $ \\      \hline
\end{tabular}
\end{adjustbox}
}
\caption{Average cost. Last two columns show $(1-c_{\text{FLS++}}/c_B) \cdot 100\%$ for $B\in \{\text{KM++}, \text{LS++}\}$. }
\label{tab:normal_iterations_costs}
\end{table}

As we have seen in the previous section, solutions computed by GFLS++ have a smaller cost than GKM++ or GLS++ on average, but need more time to terminate.
In the following we want to analyze if running KM++ or LS++ multiple times can yield better results than GFLS++ in the same time.
Thus, we repeat each algorithm until a time limit is reached and for each algorithm only the best solution is returned.

We choose some iteration value $B\in \mathbb{N}$ and report the best solution found by exactly $B$ repetitions of GFLS++.
We then repeat (G)KM++ and GLS++ as long as their respective elapsed time is less or equal to the time used by FLS++; again, we return the best solution found.

The result of this comparison is shown in Table~\ref{tab:normal_iterations_costs}.
It shows the numerical values of the final average cost of each algorithm. Additionally we evaluated the relative cost difference when compared to GFLS++.
For two algorithms $A_1,A_2\in \{\text{KM++}, \text{LS++}, \text{FLS++}\}$ and their respective average costs $c_{A_1}, c_{A_2}\in \mathbb{R}_{> 0}$ we define their respective percentage cost difference as ${C(A_1,A_2):=(1-c_{A_1}/c_{A_2})\cdot 100\%}$.
On data set \textit{circles} the cost difference 
was the highest with approximately $22.37\%$ and $8.8\%$ when we compare the final average cost of FLS++ to KM++ and LS++.
This large difference also shows that even if clusters are well separated, LS++ might still fail to find an improvement if the optimal center centroids are not close to the actual data points.
In contrast, FLS++ can more often find an improvement because it evaluates the actual optimal centers for a new choice of centers, which brings the new centers closer to the actual optimal centroids.
For the other datasets the relative difference in cost is not large, but a positive amount which, if we take into account the number of times each algorithms returned the smallest cost, can be achieved with large success probability.

\subparagraph*{Performance over time.} 
We compare the best solutions found by all three algorithms and their average cost progression over the time.
We test this procedure $R\in \mathbb{N}$ times, where the $r$-th run being defined as:
\begin{itemize}
\item Run GFLS++ $B$ times, let $t_B^r$ be the used time until termination in run $r$.
\item Repeat GKM++ and GLS++ each as long as their respective elapsed time is at most $t_B^r$.
\end{itemize}
For run $r\in [R]$ let $\tensor*[^r]{c}{_A^t}$ for $A\in \{\text{KM++}, \text{LS++}, \text{FLS++}\}$ be the current minimum cost of algorithm $A$ at time $t\in \mathbb{R}_{\geq 0}$ and run $r$.
Let $t_{\min}$ be the first point in time where $A$ did terminate in every run $r\in[R]$, i.e., $^rc_A^t$ is always defined for every time $t\geq t_{\min}$.
The average cost of any algorithm $A$ after $R$ runs in time step $t\geq t_{\min}$ is defined as $\text{AVG}(A,t):=\frac{1}{R}\sum_{r=1}^R \tensor*[^r]{c}{_A^t}$.
As before, each algorithm starts in the $r$-th run with the same set of centers from the initialization process.

Figure~\ref{fig:algComp} shows the progress of each algorithm over time.
For LS++, we use $Z=25$ as number of repeat steps, as in the original experiments in \cite{LS19}.
We use the same value for FLS++ in the following experiments if not specified otherwise.
We repeat Lloyd steps, i.e., computing memberships of all points to their closest center and recomputing the new optimal centers for this clustering, until two consecutive
iterations produce solutions $C_1, C_2$ with $1- \frac{\Phi(P,C_2)}{\Phi(P,C_1)} < 0.0001$.
All plots are averaged over $100$ iterations, i.e., $R=100$, except for (b) which is averaged over $20$ iterations.
In all plots, FLS++ achieves the smallest final average cost and often beats the other algorithms
over the entire time frame. 
Similarly, LS++ is consistently better than KM++.
In (c) we also compare with the known optimal solution OPT: Despite the FLS++ improvements, there is still a significant gap.

\begin{figure}[t]
    \begin{subfigure}[b]{0.45\textwidth}
        \centering
        \includegraphics[width=\textwidth]{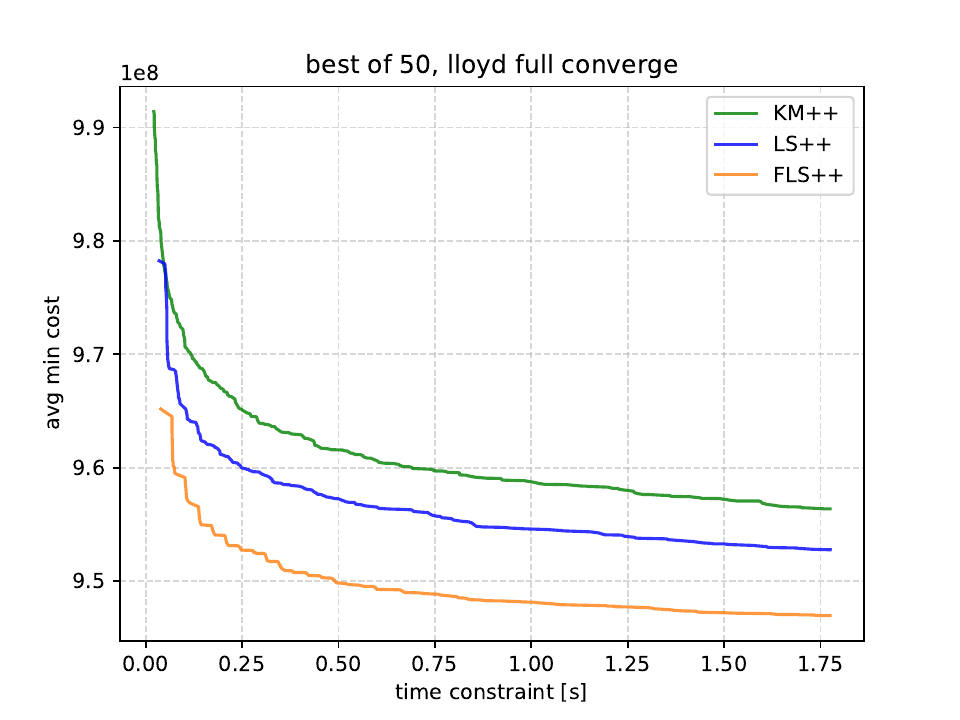}
        \caption{\textit{pr91}, $k=50$}
    \end{subfigure}%
    \begin{subfigure}[b]{0.45\textwidth}
        \centering
        \includegraphics[width=\textwidth]{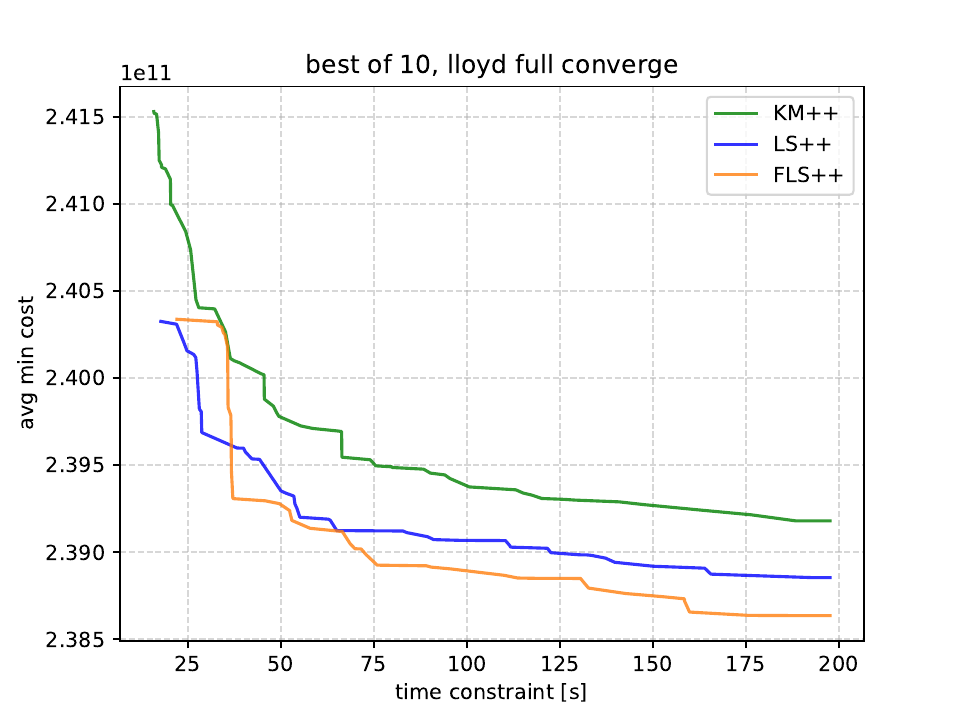}
        \caption{\textit{bio train features}, $k=25$}
    \end{subfigure}%
    \\
    \begin{subfigure}[b]{0.45\textwidth}
        \centering
        \includegraphics[width=\textwidth]{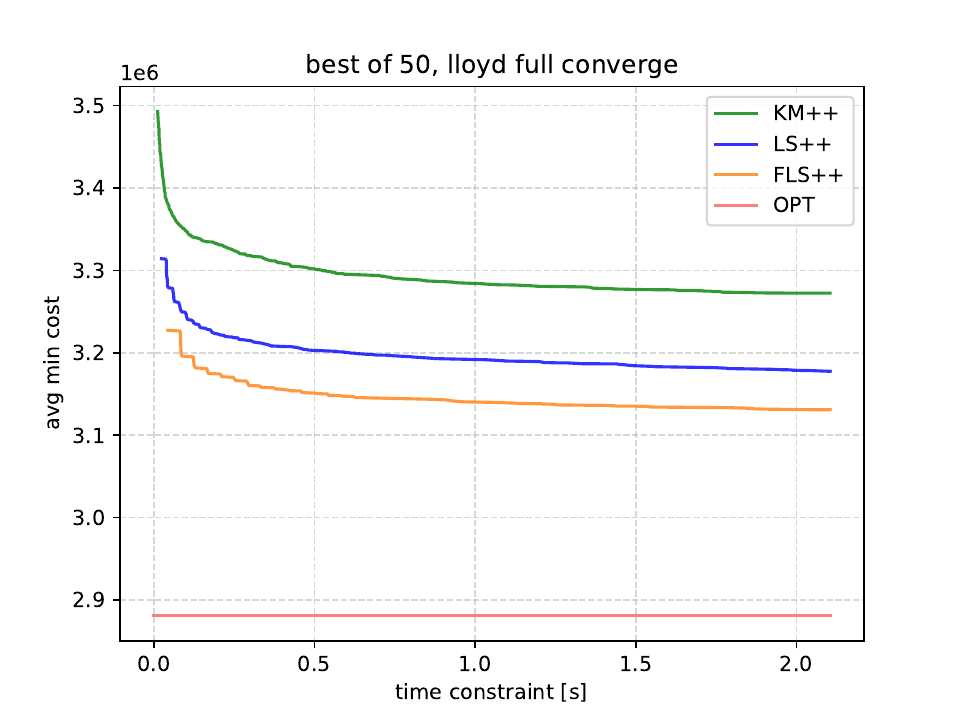}
        \caption{\textit{concrete}, $k=60$}
    \end{subfigure}%
    \begin{subfigure}[b]{0.45\textwidth}
        \centering
        \includegraphics[width=\textwidth]{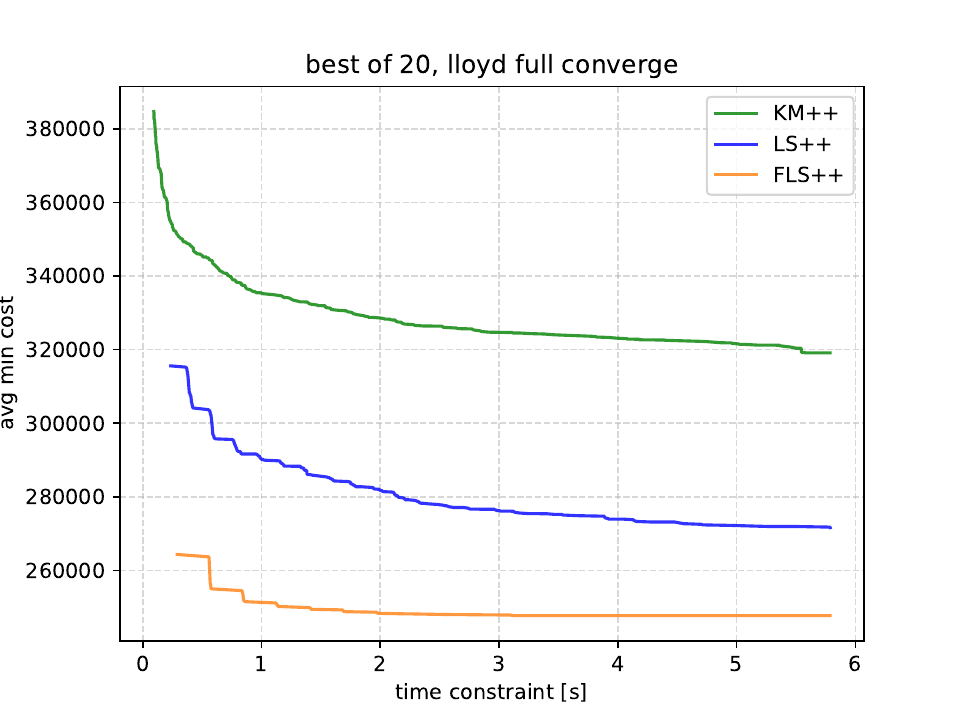}
        \caption{\textit{circles}, $k=100$}
    \end{subfigure}%
    \caption{Comparison of average cost decrease depending on the runtime of FLS++.}
\label{fig:algComp}
\end{figure}

\subparagraph*{Number of best solutions found}
Table~\ref{tab:normal_iterations_wins} shows how many times KM++, LS++ or FLS++ computed the overall best solution for a given dataset within the time limit; most of the time FLS++ did return the smallest solution.
Despite FLS++ having the fewest average number of iterations due to its longer running time, we can see that most of the time it did report the best solution.
In the remaining cases, the best solution was mostly found by LS++, while
KM++, having the most number of average iterations for each dataset, could only report the smallest solution twice, namely in the large dataset \textit{bio train features}.

\begin{table}
\centering
\scalebox{.9625}{
\begin{adjustbox}{width=\textwidth}
\begin{tabular}{|c||c|c||c|c||c|c|}
\hline
data set & \#{}Wins KM++ & \#{}Avg. iterations KM++ & \#{}Wins LS++ & \#{}Avg. iterations LS++ & \#{}Wins FLS++ & \#{}Avg iterations FLS++\\
\hline \hline
\textit{ pr91 } & $5$ & $160.49$ & $8$ & $64.42$ & $87$ & $50$\\ \hline 
\textit{ bio train features } & $2$ & $19.3$ & $6$ & $13.7$ & $12$ & $10$\\ \hline 
\textit{ concrete } & $0$ & $357.02$ & $4$ & $103.22$ & $96$ & $50$\\ \hline 
\textit{ circles } & $0$ & $103.73$ & $0$ & $29.13$ & $94$ & $20$\\ \hline 
\end{tabular}
\end{adjustbox}
}
\caption{Respective number of times each algorithm returned the smallest solution for the respective image in Figure~\ref{fig:algComp} and how many iterations we could repeat the algorithm until the time limit was reached. We only count a win if the respective algorithm returned the strictly smallest cost compared to the other two algorithms in this run.}
\label{tab:normal_iterations_wins}
\end{table}

\subsection{Greedy vs. non-greedy $d^2$-sampling}\label{sec:fancy_greedy_compare}
Although greedy $d^2$-sampling yields a bad theoretical approximation ratio \cite{BERM19}, it is commonly used in practice, e.g. it is the default in the scikit-learn-library with parameter $\ell=2+\lfloor \ln(k) \rfloor$.
In this section, we investigate whether the practical choice is justified 
and how LS++/FLS++ changes when using greedy sampling.

In Figure~\ref{fig:algCompGreedy} we compare the average performance of all algorithms when using \textit{greedy $d^2$-sampling} vs. \textit{standard $d^2$-sampling}.
Here, on almost all evaluated data sets, the average cost decrease is larger when using \textit{greedy $d^2$-sampling} 
regardless of which algorithm is chosen afterwards.
Especially KM++ benefits from using \textit{greedy $d^2$-sampling}.
LS++ and FLS++ benefit from \textit{greedy $d^2$-sampling} most of the time, and by roughly the same amount. 
Using \textit{greedy $d^2$-sampling} on dataset \textit{concrete} in (c) brings each algorithm closer to the optimal solution but a gap remains.
An analysis of the greedy initialization presented in the same way as in Section~\ref{sec:fancy_compare} as well as an analysis how the initialization process did improve the average cost of the returned solution can be found in section~\ref{sec:greedy_init} in the Appendix.
\begin{figure}[!t]
    \begin{subfigure}[b]{0.45\textwidth}
        \centering
        \includegraphics[width=\textwidth]{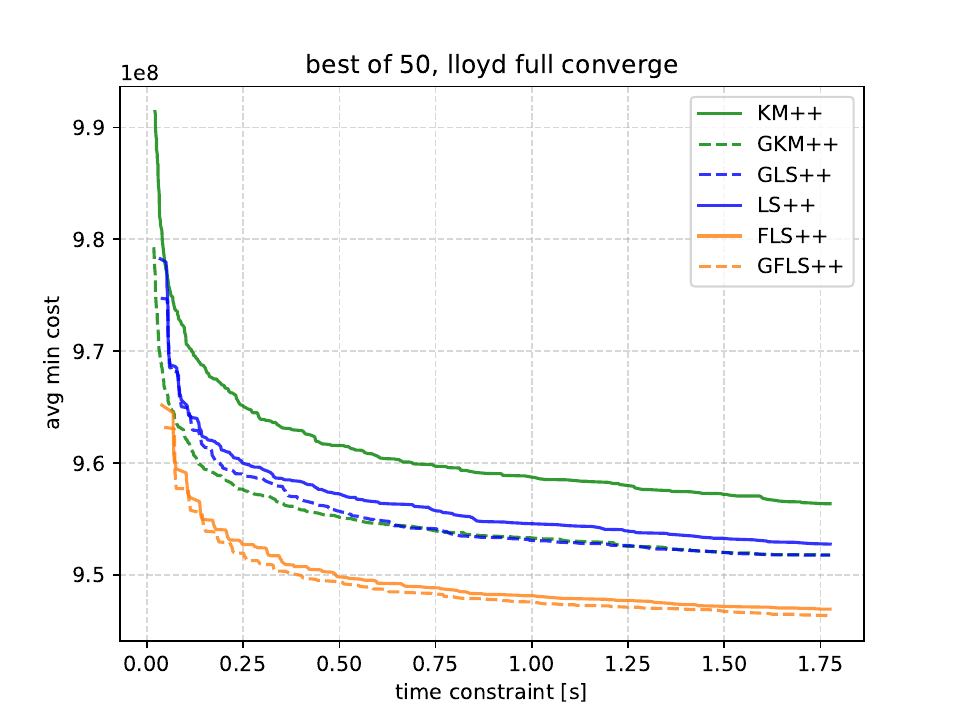}
        \caption{\textit{pr91}, $k=50$}
        \label{fig:gfls_pr91}
    \end{subfigure}%
    \begin{subfigure}[b]{0.45\textwidth}
        \centering
        \includegraphics[width=\textwidth]{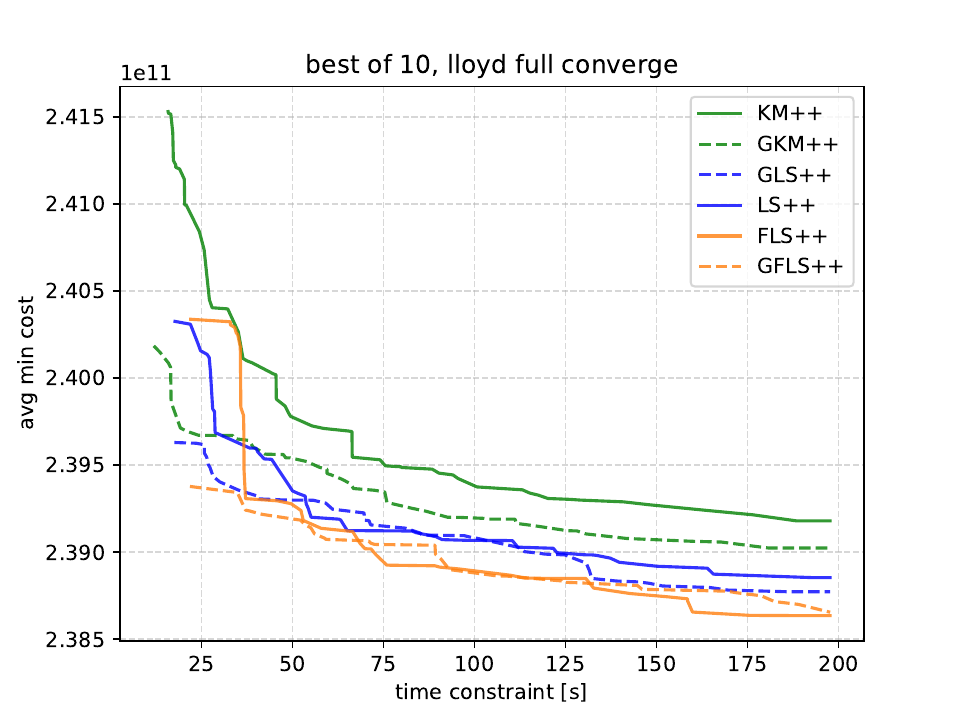}
        \caption{\textit{bio train features}, $k=25$}
    \end{subfigure}%
    \\ 
    \begin{subfigure}[b]{0.45\textwidth}
        \centering
        \includegraphics[width=\textwidth]{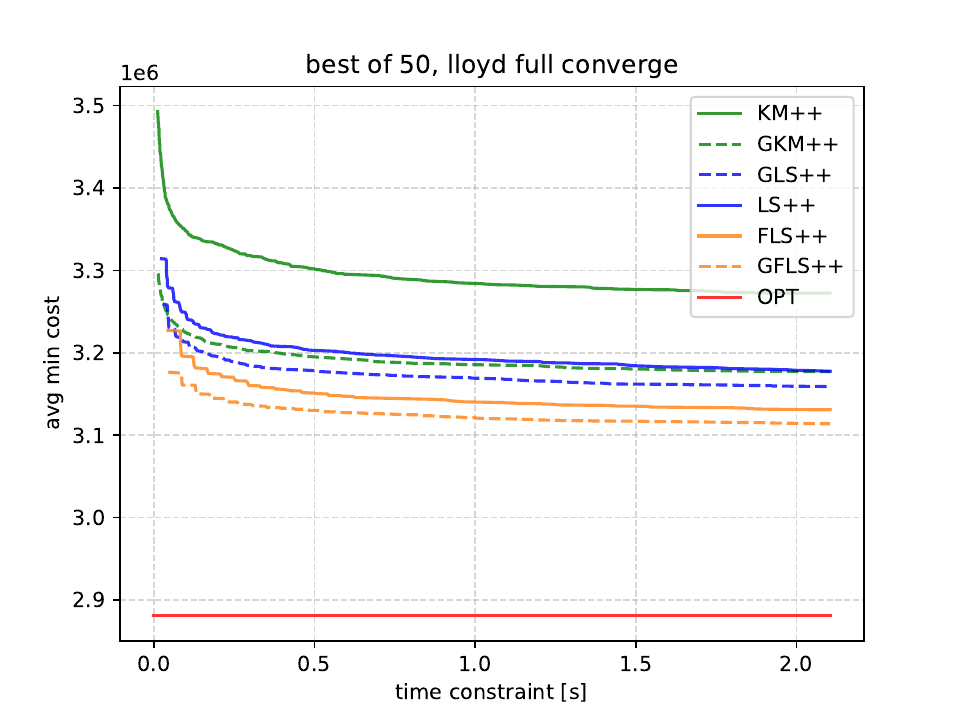}
        \caption{\textit{concrete}, $k=60$}
    \end{subfigure}%
    \begin{subfigure}[b]{0.45\textwidth}
        \centering
        \includegraphics[width=\textwidth]{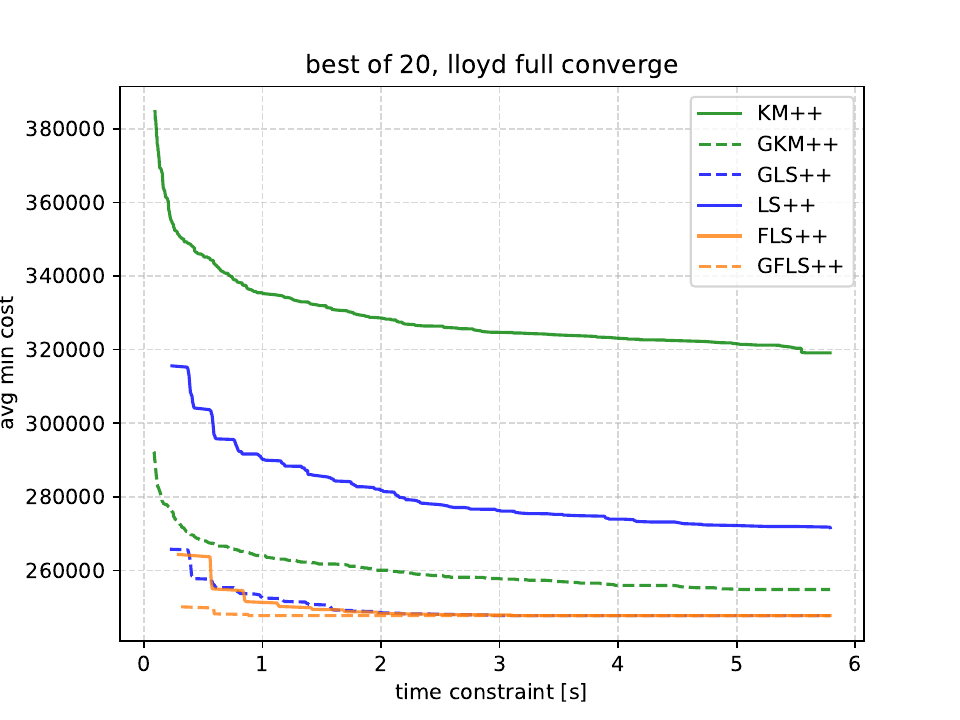}
        \caption{\textit{circles}, $k=100$}
    \end{subfigure}%
    \caption{Comparison of average cost decrease depending on the runtime of FLS++. Dotted lines corresponds to the greedy variant of the original algorithm.}
\label{fig:algCompGreedy}
\end{figure}

\subparagraph*{Varying the number of local search steps}
Running LS++ and FLS++ with the same number of local search steps expectedly leads to FLS++ performing better in terms of cost, but worse in terms of runtime. But Figure~\ref{fig:boxplot} already hints at the possibility that using fewer local search steps in FLS++ might lead to better cost \emph{and} better runtime. 
To this end, we fix the number of local search iterations of LS++ to $25$ (guided by the experiments in~\cite{LS19}), while trying out different values for FLS++. 

More precisely, we let the algorithms perform $50$ runs in total, where one run consists of the following calls. First, we call LS++ with $Z=25$, and afterwards FLS++ for all $Z \in \{5, \ldots, 20\}$. Importantly, we again ensure that each call in the run starts with the same set of initial centers.
After completing all $50$ runs, we compute the average cost and runtime of the LS++ calls across all runs, and similarly the averages for all FLS++ calls. The results are shown in Figure~\ref{fig:var_Z}. The red line indicates the average cost (resp. time) of LS++ with $Z = 25$ over all $50$ runs. Each point on the blue curve corresponds to the average cost (resp. time) of $50$ calls to FLS++ with a certain number of local search iterations.
We see that for FLS++ even very small values of $Z$ already suffice to beat LS++ in terms of average cost. At the same time, these small $Z$ lead to a shorter runtime, making it possible to beat LS++ both in terms of cost and time simultaneously.

To ensure that LS++ could not theoretically improve by, given some time limit, rerunning it multiple times we use a similar setup as in Section~\ref{sec:fancy_compare} to analyze the average performance for some given time bound
in the Appendix, see Figure~\ref{fig:same_ls_steps}. 
In this case we use the time limit given by FLS++ for $B=25$ for all algorithms.
We also fixed $Z=25$ for LS++ and GLS++ but varied the number of local search steps for FLS++ and GFLS++ as $Z\in\{5,10,15,20\}$.
Even with fewer local search steps in FLS++ or GLS++ we get on average a smaller cost compared to LS++ or GLS++.
For two considered datasets \textit{concrete} and \textit{pr91}, choosing $Z=15$ was always enough to get approximately the same cost. For \textit{pr91}, even choosing $Z=5$ did result in a better solution than LS++ or GLS++ on average for the specified time limit.

\begin{figure}[!hb]
    \begin{subfigure}[b]{0.4\textwidth}
        \centering
        \includegraphics[width=\textwidth]{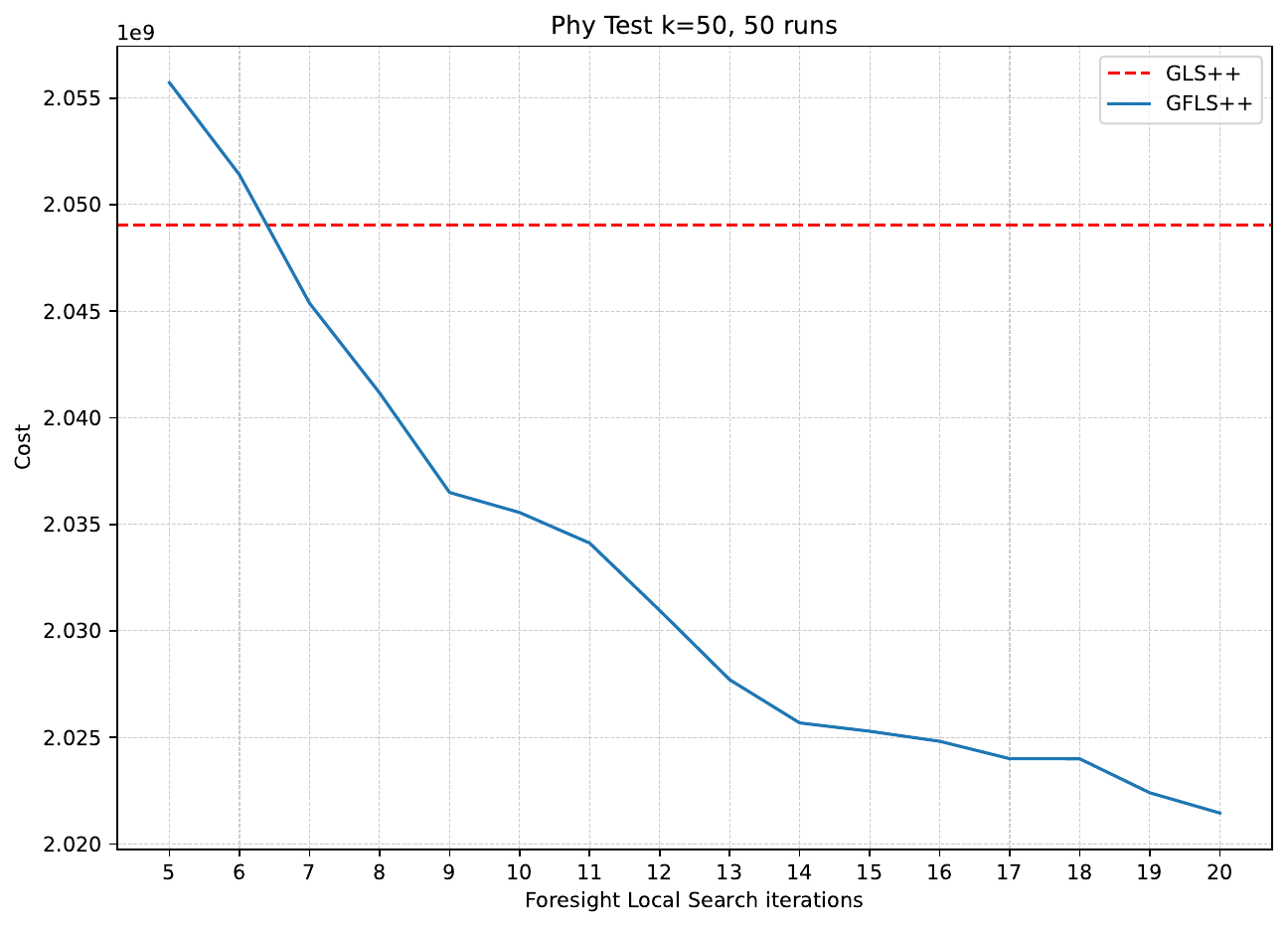}
        \caption{\textit{KDD Phy Test} Cost, $k=50$}
    \end{subfigure}%
    \begin{subfigure}[b]{0.4\textwidth}
        \centering
        \includegraphics[width=\textwidth]{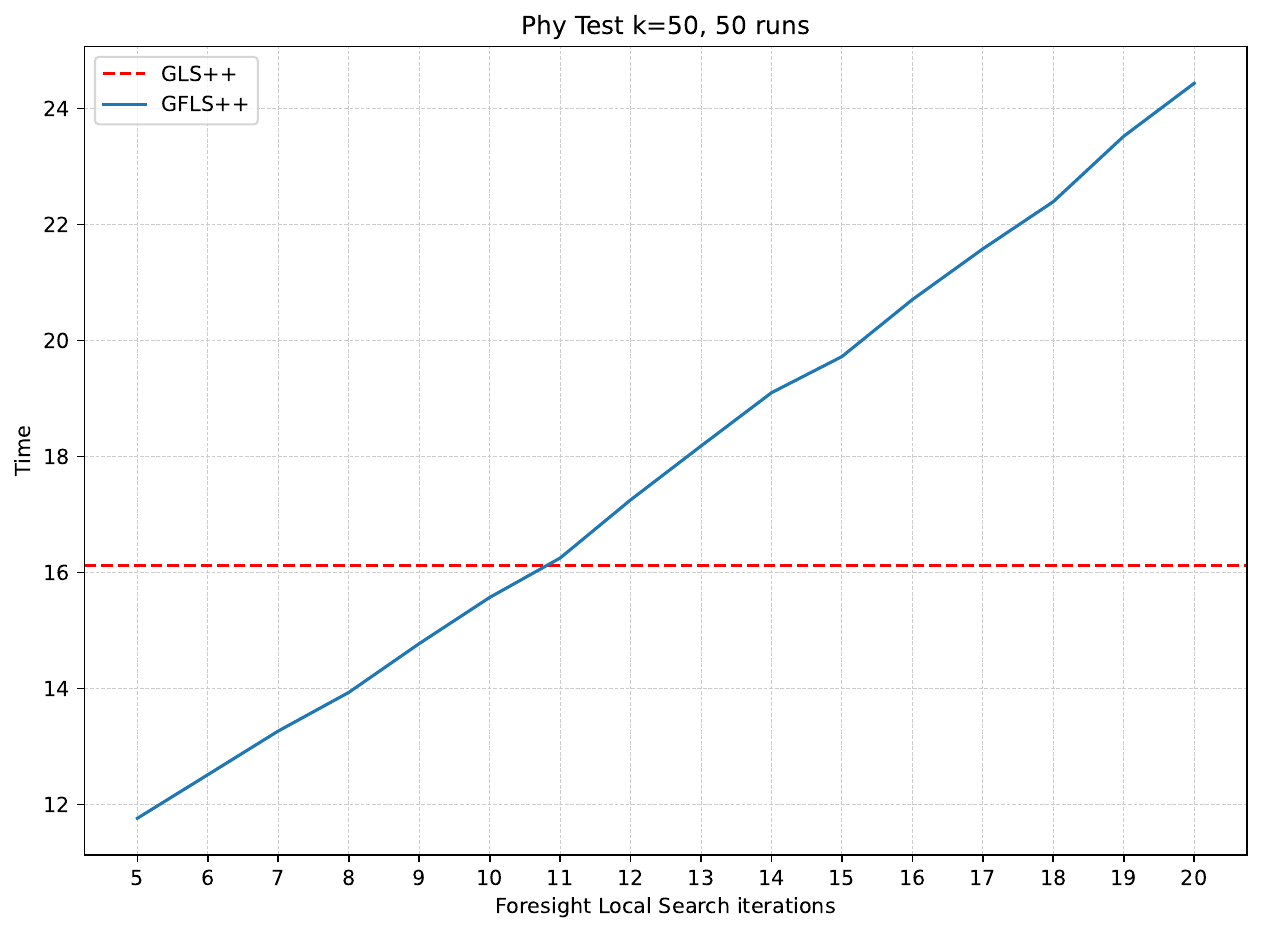}
        \caption{\textit{KDD Phy Test} Time, $k=50$}
    \end{subfigure}%
    \\
    \begin{subfigure}[b]{0.4\textwidth}
        \centering
        \includegraphics[width=\textwidth]{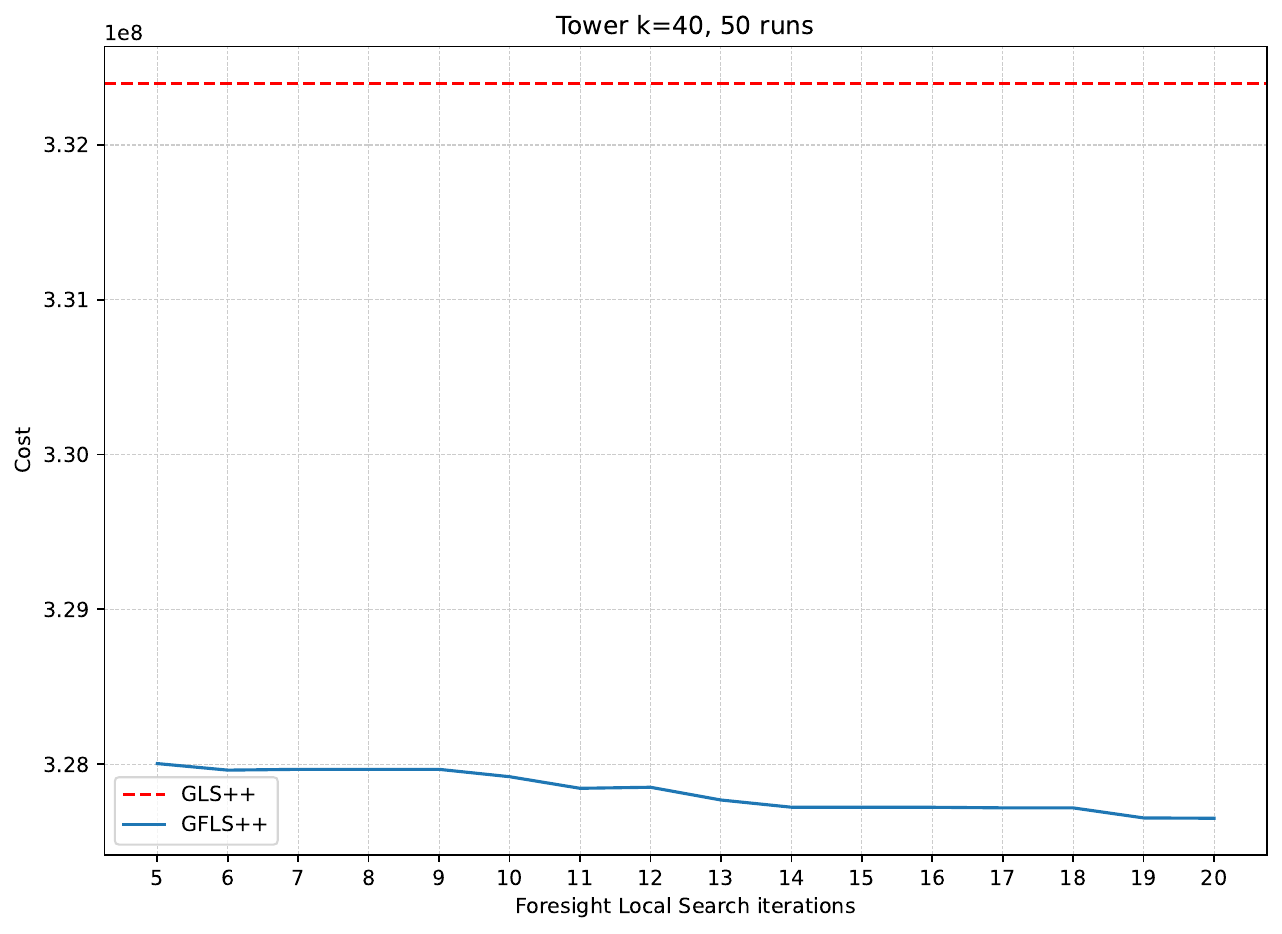}
        \caption{\textit{Tower} Cost, $k=40$}
    \end{subfigure}%
    \begin{subfigure}[b]{0.4\textwidth}
        \centering
        \includegraphics[width=\textwidth]{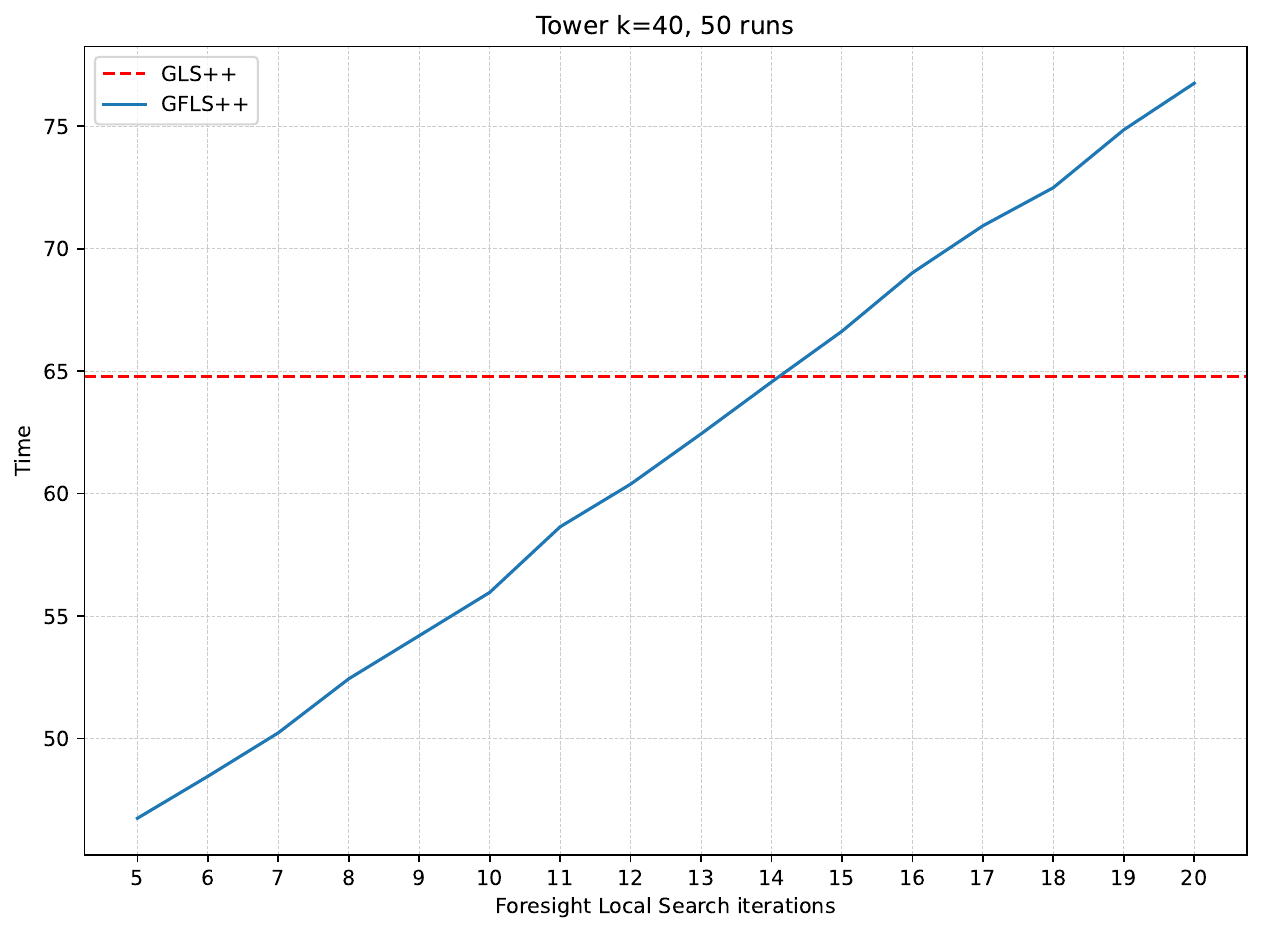}
        \caption{\textit{Tower} Time, $k=50$}
    \end{subfigure}%
    \caption{Impact on cost and time of increasing the number of local search steps in FLS++. The dashed red line shows average cost (resp. time) of $50$ runs of LS++ with $25$ local search steps.}
\label{fig:var_Z}
\end{figure}

\section{Conclusion}
We propose the new algorithm FLS++ for the \textit{k-means} problem: The algorithm uses \emph{foresight} by combining local search with $d^2$-sampling and can outperform the established methods of $k$-means++ and LS++ in terms of both quality and running time. 
Even though FLS++ only differs slightly from LS++, our experiments suggest that performing just one Lloyd iteration in each local search step often leads to better performance in terms of cost as well as runtime. 
Additionally, we investigate if the popularity of \textit{greedy $d^2$-initialization} in practice is justified 
even though it performs poorly in theory. Surprisingly, it turns out that on most data sets considered here, all standard algorithms (as well as our new algorithm) perform substantially better when using greedy initialization.

On the other hand, running our algorithm \emph{without} greedy initialization offers a robust way to recover the quality of the greedy initialization without sacrificing running time.
\clearpage

\bibliographystyle{splncs04}
\bibliography{bibliography}{}

\begin{thebibliography}{10}
\providecommand{\url}[1]{\texttt{#1}}
\providecommand{\urlprefix}{URL }
\providecommand{\doi}[1]{https://doi.org/#1}

\bibitem{KDD-2004}
Available at https://www.kdd.org/kdd-cup/view/kdd-cup-2004/data

\bibitem{AhmadianNSW20}
Ahmadian, S., Norouzi{-}Fard, A., Svensson, O., Ward, J.: Better guarantees for k-means and {E}ucl. k-median by primal-dual alg. {SIAM} J. Comput.  \textbf{49}(4) (2020)

\bibitem{aloise2012}
Aloise, D., Hansen, P., Liberti, L.: An improved column generation algorithm for minimum sum-of-squares clustering. Mathematical Programming  \textbf{131},  195--220 (2012)

\bibitem{AV07}
Arthur, D., Vassilvitskii, S.: K-means++: The advantages of careful seeding. In: Proceedings of the 18th SODA. p. 1027–1035. USA (2007)

\bibitem{AwasthiCKS15}
Awasthi, P., Charikar, M., Krishnaswamy, R., Sinop, A.K.: The hardness of approximation of {E}uclidean k-means. In: Arge, L., Pach, J. (eds.) Proc. of the 31st SoCG. LIPIcs, vol.~34, pp. 754--767 (2015)

\bibitem{BERM19}
Bhattacharya, A., Eube, J., R{\"o}glin, H., Schmidt, M.: Noisy, greedy and not so greedy k-means++. In: Proc. of the 28th ESA (2020)

\bibitem{CelebiKV13}
Celebi, M.E., Kingravi, H.A., Vela, P.A.: A comparative study of efficient initialization methods for the k-means clustering algorithm. Expert Syst. Appl.  \textbf{40}(1),  200--210 (2013)

\bibitem{choo2020}
Choo, D., Grunau, C., Portmann, J., Rozhon, V.: k-means++: few more steps yield constant approximation. In: International Conference on Machine Learning. pp. 1909--1917. PMLR (2020)

\bibitem{C21}
Conrads, T.: {L}okale {S}uch- und {S}amplingmethoden für das k-{M}eans- und k-{M}edian-{P}roblem. Master's thesis, {U}niversität zu {K}öln (2021)

\bibitem{dasgupta2008hardness}
Dasgupta, S.: The hardness of k-means clustering (2008), technical report

\bibitem{Elkan03}
Elkan, C.: Using the triangle inequality to accelerate k-means. In: Fawcett, T., Mishra, N. (eds.) Proc. of the 20th ICML. pp. 147--153 (2003)

\bibitem{FS06}
Frahling, G., Sohler, C.: A fast k-means implementation using coresets. In: Proc. of the 22nd SoCG. p. 135–143 (2006)

\bibitem{FS18}
Fr{\"{a}}nti, P., Sieranoja, S.: K-means properties on six clustering benchmark datasets. Appl. Intell.  \textbf{48}(12),  4743--4759 (2018)

\bibitem{franti2019}
Fr{\"{a}}nti, P., Sieranoja, S.: How much can k-means be improved by using better initialization and repeats? Pattern Recognition  \textbf{93},  95--112 (2019)

\bibitem{fritzke2017boxes}
Fritzke, B.: The k-means-u* algorithm: non-local jumps and greedy retries improve k-means++ clustering. CoRR  \textbf{abs/1706.09059} (2017)

\bibitem{grunau2023}
Grunau, C., {\"O}z{\"u}do{\u{g}}ru, A.A., Rozho{\v{n}}, V., T{\v{e}}tek, J.: A nearly tight analysis of greedy k-means++. In: Proceedings of the 2023 Annual ACM-SIAM Symposium on Discrete Algorithms (SODA). pp. 1012--1070. SIAM (2023)

\bibitem{Hamerly12}
Hamerly, G.: Making k-means even faster. In: SDM. pp. 130--140. {SIAM} (2010)

\bibitem{HPJK03}
Heinz, G., Peterson, L.J., Johnson, R.W., Kerk, C.J.: Exploring relationships in body dimensions. Journal of Statistics Education  \textbf{11}(2) (2003)

\bibitem{KanungoMNPSW04}
Kanungo, T., Mount, D.M., Netanyahu, N.S., Piatko, C.D., Silverman, R., Wu, A.Y.: A local search approximation algorithm for k-means clustering. Comput. Geom.  \textbf{28}(2-3),  89--112 (2004)

\bibitem{LS19}
Lattanzi, S., Sohler, C.: A better k-means++ algorithm via local search. In: Proc. of the 36th ICML. Proceedings of Machine Learning Research, vol.~97, pp. 3662--3671. PMLR (09--15 Jun 2019)

\bibitem{LeeSW17}
Lee, E., Schmidt, M., Wright, J.: Improved and simplified inapproximability for k-means. Inf. Process. Lett.  \textbf{120},  40--43 (2017)

\bibitem{mahajan2009planar}
Mahajan, M., Nimbhorkar, P., Varadarajan, K.R.: The planar k-means problem is np-hard. Theor. Comput. Sci.  \textbf{442},  13--21 (2012)

\bibitem{PR91}
Padberg, M., Rinaldi, G.: A branch-and-cut algorithm for the resolution of large-scale symmetric traveling salesman problems. SIAM Review  \textbf{33}(1),  60--100 (1991)

\bibitem{scikit-learn}
Pedregosa, F., Varoquaux, G., Gramfort, A., Michel, V., Thirion, B., Grisel, O., Blondel, M., Prettenhofer, P., Weiss, R., Dubourg, V., Vanderplas, J., Passos, A., Cournapeau, D., Brucher, M., Perrot, M., Duchesnay, E.: Scikit-learn: Machine learning in {P}ython. Journal of Machine Learning Research  \textbf{12},  2825--2830 (2011)

\bibitem{pena1999}
Peña, J., Lozano, J., Larrañaga, P.: An empirical comparison of four initialization methods for the k-means algorithm. Pattern Recognition Letters  \textbf{20}(10),  1027--1040 (1999)

\bibitem{NIPS2016_357a6fdf}
Wei, D.: A constant-factor bi-criteria approximation guarantee for k-means++. In: Advances in Neural Information Processing Systems. vol.~29 (2016)

\bibitem{YEH19981797}
Yeh, I.C.: Modeling of strength of high-performance concrete using artificial neural networks. Cement and Concrete Research  \textbf{28}(12),  1797--1808 (1998)

\end{thebibliography}

\newpage

\appendix
\section*{Appendix}

\section{Further datasets and cumulative costs}
\begin{figure}[b]
    \centering
    \begin{subfigure}[b]{.45\textwidth}
        \includegraphics[width=\linewidth]{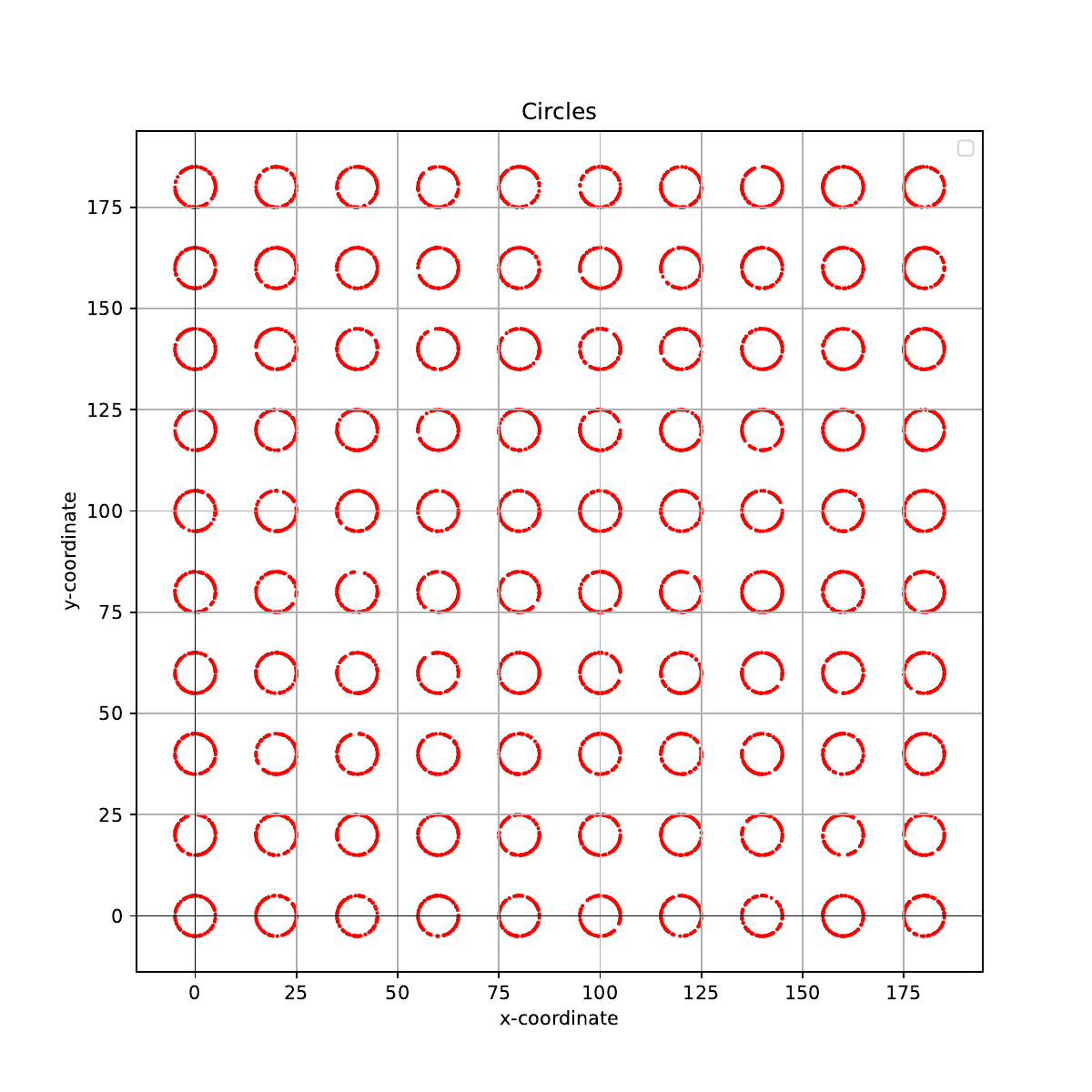}
        \caption{Circles dataset}
    \end{subfigure}
    \hfill
    \begin{subfigure}[b]{.45\textwidth}
        \includegraphics[width=\linewidth]{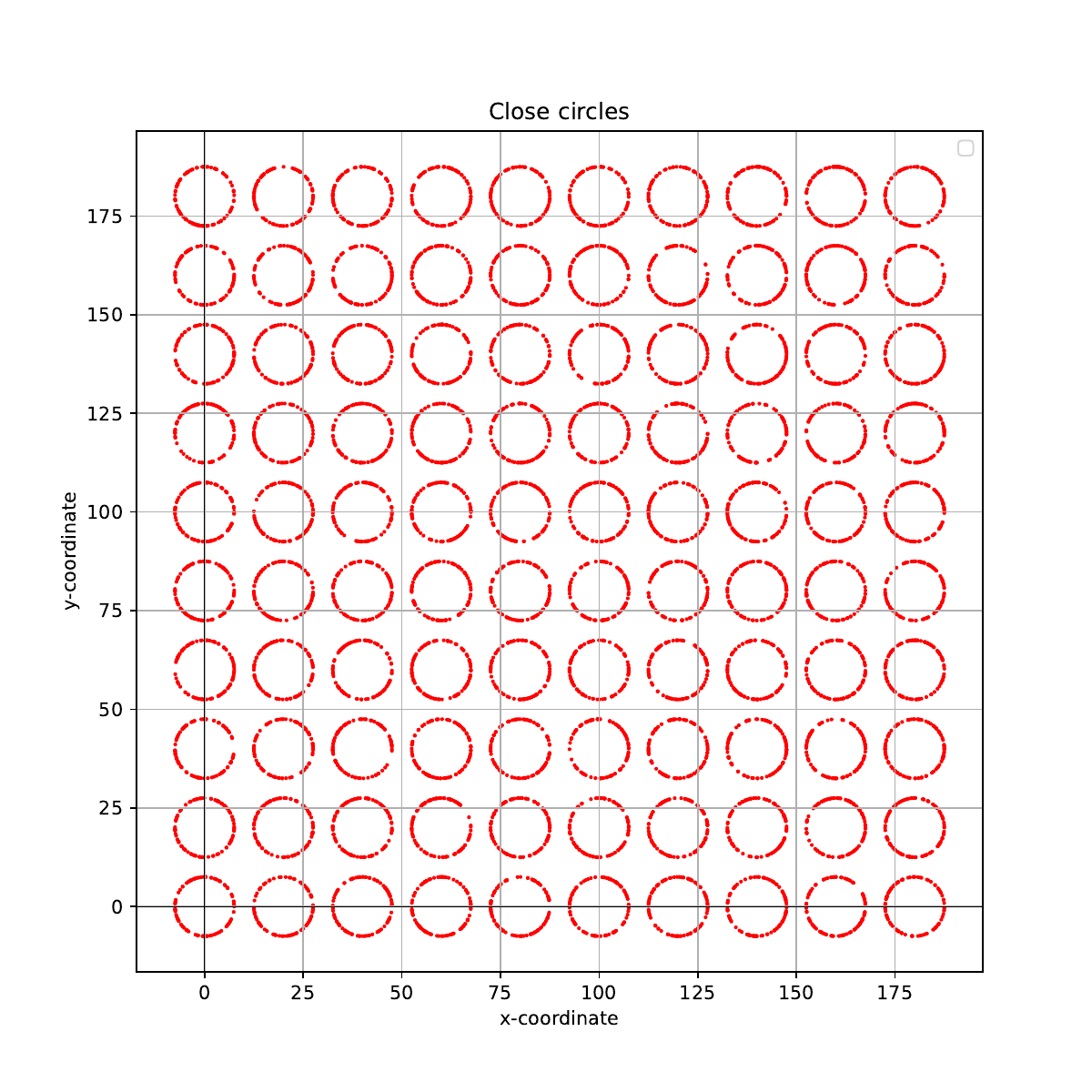}
        \caption{Close circles dataset}
    \end{subfigure}
    \caption{Two artificially generated datasets also used in the experiments.}
    \label{fig:circle_circle}
\end{figure}

\begin{table}[b]
\centering
\scalebox{.9625}{
\begin{adjustbox}{width=\textwidth}
\begin{tabular}{|c||c|c||c|c||c|c|}
\hline
data set & \#{}Wins KM++ & \#{}Avg. iterations KM++ & \#{}Wins LS++ & \#{}Avg. iterations LS++ & \#{}Wins FLS++ & \#{}Avg iterations FLS++\\
\hline \hline
\textit{ close circles } & $0$ & $134.27$ & $0$ & $57.68$ & $100$ & $50$\\ \hline  
\textit{ D31 } & $0$ & $229.98$ & $2$ & $73.55$ & $3$ & $50$\\ \hline 
\textit{ frymire } & $0$ & $42.6$ & $5$ & $14.05$ & $15$ & $10$\\ \hline 
\textit{ rectangles } & $0$ & $453.67$ & $0$ & $88.66$ & $0$ & $50$\\ \hline 
\textit{ s3 } & $0$ & $127.15$ & $3$ & $58.58$ & $97$ & $50$\\ \hline 
\textit{ Body measurements } & $0$ & $401.23$ & $0$ & $108.15$ & $100$ & $50$\\ \hline 
\end{tabular}
\end{adjustbox}
}
\caption{Respective number of times each algorithm returned the smallest solution for the respective image in Figure~\ref{fig:algCompGreedy2} and how many iterations we could repeat the algorithm until the time limit was reached. We only count a win if the respective algorithm returned the strictly smallest cost compared to the other two algorithms in this run.}
\label{tab:normal_iterations_wins2}
\end{table}

\begin{table}[H]
\centering
\scalebox{.9625}{
\begin{adjustbox}{width=\textwidth}
\begin{tabular}{|c||c|c||c|c||c|c|}
\hline
data set & \#{}Wins GKM++ & \#{}Avg. iterations GKM++ & \#{}Wins GLS++ & \#{}Avg. iterations GLS++ & \#{}Wins GFLS++ & \#{}Avg. iterations GFLS++\\
\hline \hline
\textit{ pr91 } & $5$ & $160.39$ & $8$ & $62.4$ & $87$ & $48.14$\\ \hline 
\textit{ bio train features } & $2$ & $20.9$ & $6$ & $12.95$ & $12$ & $10.1$\\ \hline 
\textit{ concrete } & $0$ & $295.71$ & $0$ & $95.44$ & $100$ & $48.25$\\ \hline 
\textit{ circles } & $0$ & $109.96$ & $0$ & $29.33$ & $0$ & $19.86$\\ \hline 
\hline
\textit{ close circles } & $3$ & $127.14$ & $2$ & $55.47$ & $95$ & $48.46$\\ \hline 
\textit{ D31 } & $0$ & $250.55$ & $1$ & $69.75$ & $0$ & $47.3$\\ \hline 
\textit{ frymire } & $5$ & $41.15$ & $2$ & $13.55$ & $13$ & $10$\\ \hline 
\textit{ rectangles } & $0$ & $442.31$ & $0$ & $83.12$ & $0$ & $48.12$\\ \hline 
\textit{ s3 } & $4$ & $131$ & $6$ & $57.38$ & $89$ & $48.38$\\ \hline 
\textit{ Body measurements } & $0$ & $336.77$ & $0$ & $101.81$ & $100$ & $47.72$\\ \hline 
\end{tabular}
\end{adjustbox}
}
\caption{Respective number of times each algorithm returned the smallest solution for the respective image in Figure~\ref{fig:algCompGreedy} and how many iterations we could repeat the algorithm until the time limit was reached. We only count a win if the respective algorithm returned the strictly smallest cost compared to the other two algorithms in this run.}
\label{tab:greedy_iterations_wins}
\end{table}

In the following we also show the performance of each algorithm with \textit{normal $d^2$-sampling} and \textit{greedy $d^2$-sampling} for some other datasets.
Like in sections~\ref{sec:fancy_compare} and \ref{sec:fancy_greedy_compare} we evaluate the performances of each algorithm using only \textit{normal $d^2$-sampling} and then with \textit{greedy $d^2$-sampling} by checking f.e. which algorithm did return the best solution in every run or the cost factors between the algorithms.
Figure~\ref{fig:algCompGreedy2} shows the development of the average found minimum cost for each algorithm for 6 other datasets.
Each dataset was averaged over $100$ iterations, except for the dataset \textit{frymire}, which is averaged over $20$ iterations.

We can see in Figure~\ref{fig:algCompGreedy2} almost the same image as for the previous cases in section~\ref{sec:fancy_compare}. For datasets \textit{D31} and \textit{rectangles} we additionally see that all algorithms except for KM++ almost never fail to find the optimal solution in a short amount of time.

\begin{figure}
    \begin{subfigure}[b]{0.32\textwidth}
        \centering
        \includegraphics[width=\textwidth]{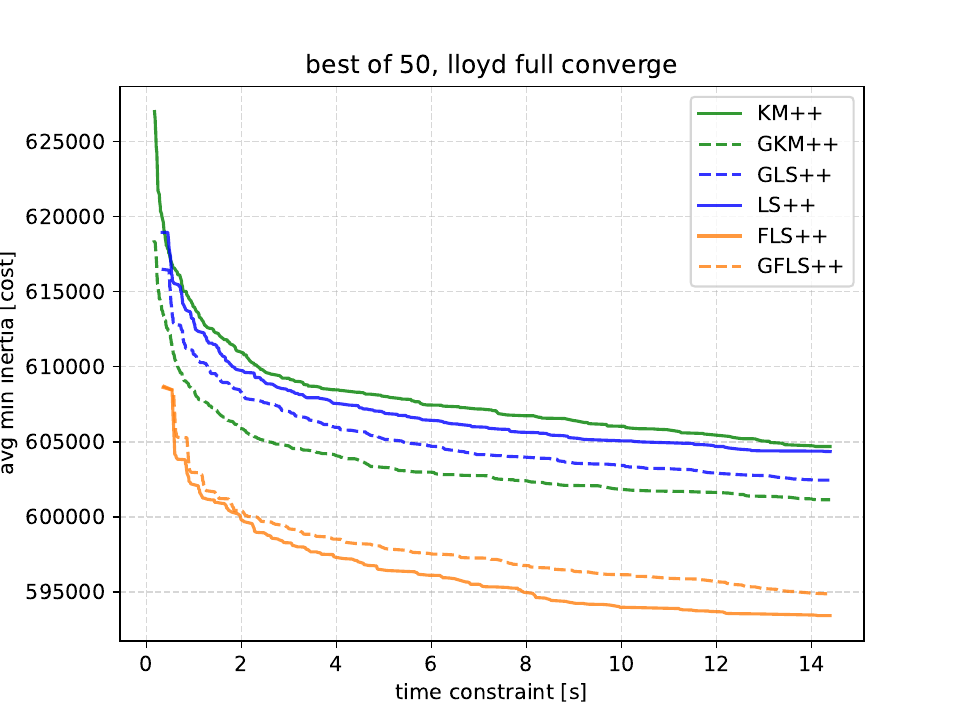}
        \caption{\textit{close circles}, $k=100$}
        \label{fig:closecircles}
    \end{subfigure}%
    \begin{subfigure}[b]{0.32\textwidth}
        \centering
        \includegraphics[width=\textwidth]{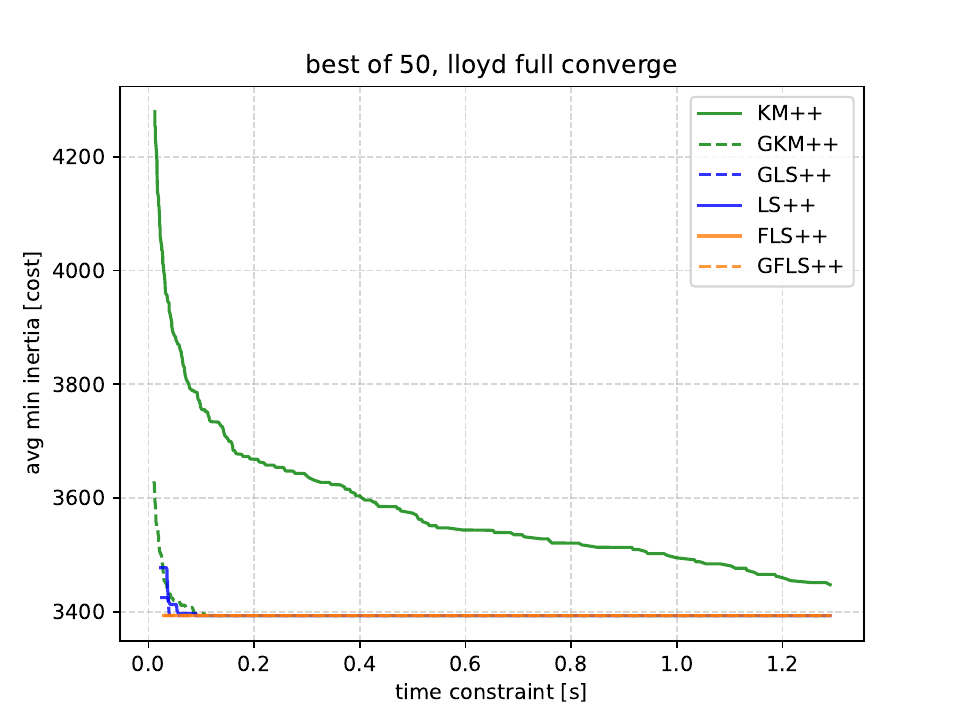}
        \caption{\textit{D31}, $k=31$}
    \end{subfigure}%
    \begin{subfigure}[b]{0.32\textwidth}
        \centering
        \includegraphics[width=\textwidth]{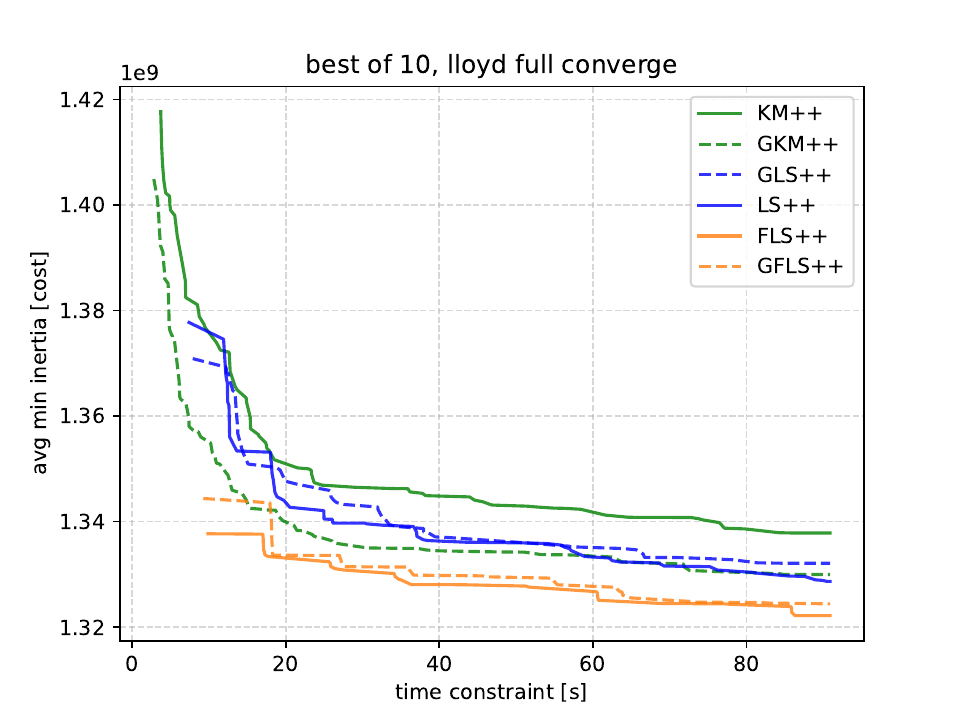}
        \caption{\textit{frymire}, $k=20$}
    \end{subfigure}%
    \\
    \begin{subfigure}[b]{0.32\textwidth}
        \centering
        \includegraphics[width=\textwidth]{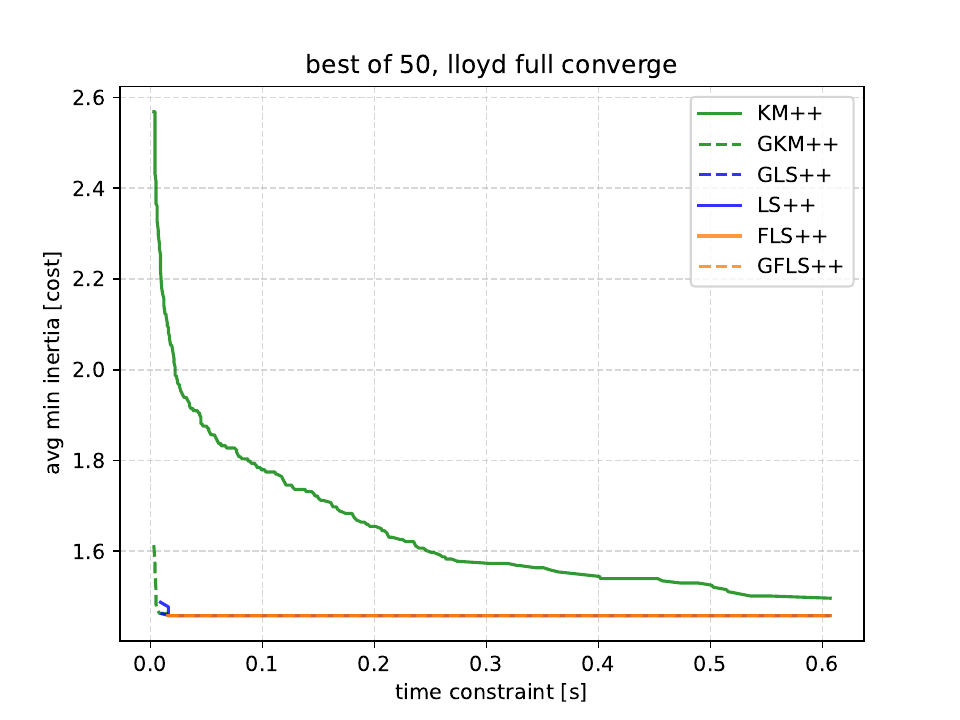}
        \caption{\textit{rectangles}, $k=36$}
    \end{subfigure}
    \begin{subfigure}[b]{0.32\textwidth}
        \centering
        \includegraphics[width=\textwidth]{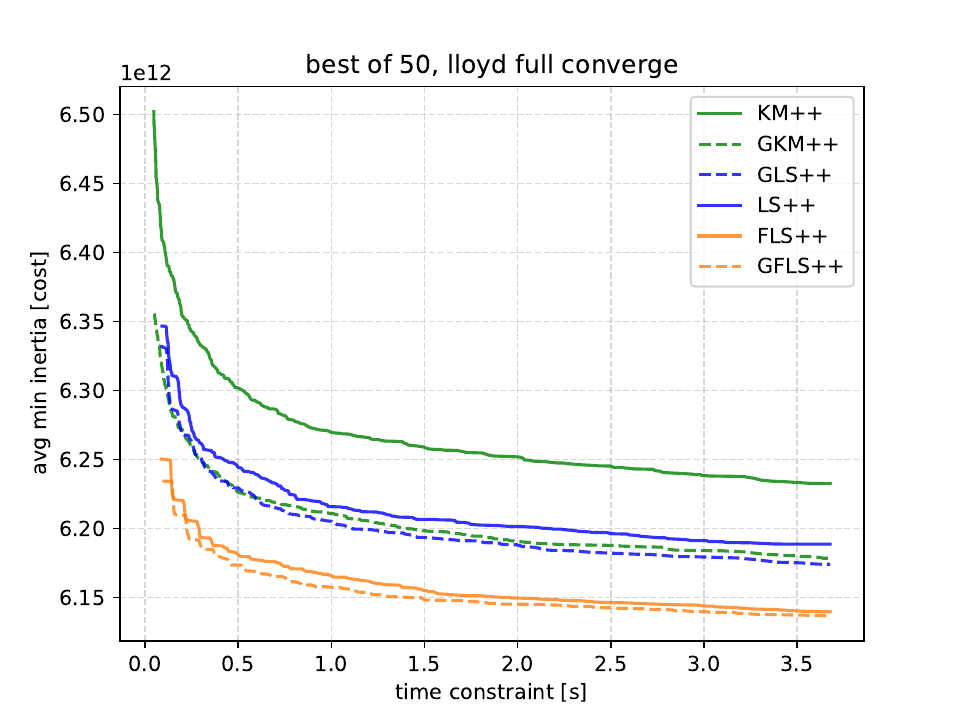}
        \caption{\textit{s3}, $k=50$}
    \end{subfigure}
    \begin{subfigure}[b]{0.32\textwidth}
        \centering
        \includegraphics[width=\textwidth]{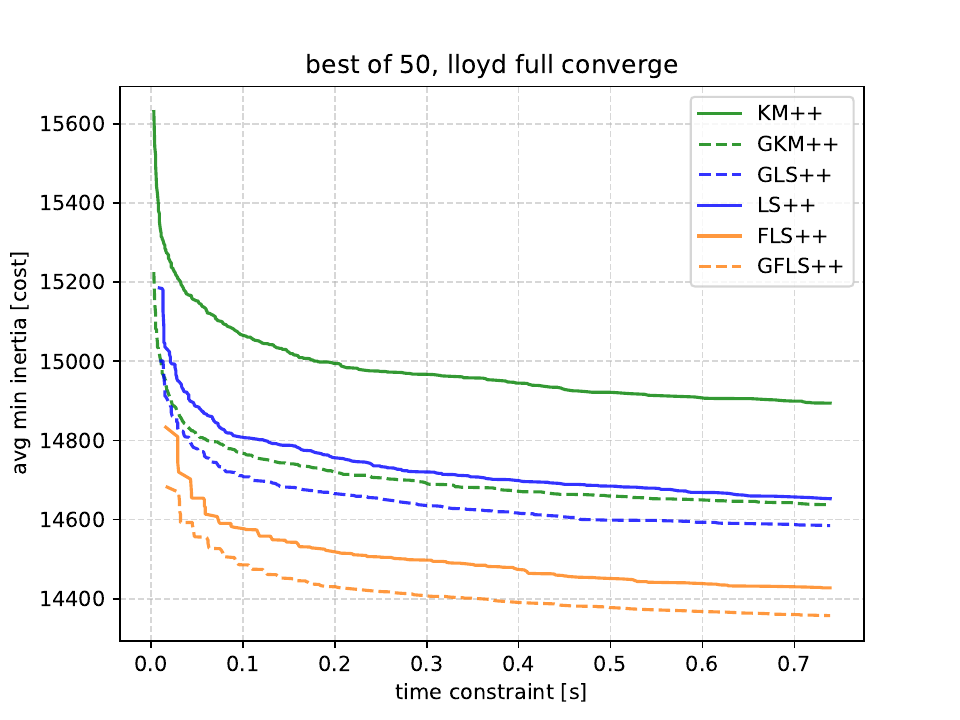}
        \caption{\textit{Body measurements}, $k=50$}
    \end{subfigure}
    \caption{Comparison of average cost decrease depending on the runtime of FLS++. Dotted lines corresponds to the greedy variant of the original algorithm.}
\label{fig:algCompGreedy2}
\end{figure}

\begin{table}
\centering
\scalebox{.9625}{
\begin{adjustbox}{width=\textwidth}
\begin{tabular}{|c||Sc|Sc|Sc|c|c|}
\hline
data set & $c_{\text{KM++}}$ & $c_{\text{LS++}}$ & $c_{\text{FLS++}}$ & $C(\text{FLS++}, \text{KM++})$ & $C(\text{FLS++}, \text{LS++})$\\
\hline \hline
\textit{ close circles } & $ 6.0468\text{{E+}}05 $ & $ 6.0434\text{{E+}}05 $ & $ 5.9341\text{{E+}}05 $ & $ 1.86\% $ & $ 1.81\% $ \\      \hline
\textit{ D31 } & $ 3447.57 $ & $ 3393.26 $ & $ 3393.26 $ & $ 1.58\% $ & $ 0\% $ \\      \hline
\textit{ frymire } & $ 1.3378\text{{E+}}09 $ & $ 1.3287\text{{E+}}09 $ & $ 1.3222\text{{E+}}09 $ & $ 1.17\% $ & $ 0.49\% $ \\      \hline
\textit{ rectangles } & $ 1.5 $ & $ 1.46 $ & $ 1.46 $ & $ 2.56\% $ & $ 0\% $ \\      \hline
\textit{ s3 } & $ 6.2325\text{{E+}}12 $ & $ 6.1887\text{{E+}}12 $ & $ 6.1397\text{{E+}}12 $ & $ 1.49\% $ & $ 0.79\% $ \\      \hline
\textit{ Body measurements } & $ 14894.11 $ & $ 14652.64 $ & $ 14427.42 $ & $ 3.13\% $ & $ 1.54\% $ \\      \hline
\end{tabular}
\end{adjustbox}
}
\caption{Respective average cost of each algorithm. Last two columns show $(1-c_{\text{FLS++}}/c_A) \cdot 100\%$ for $A\in \{\text{KM++}, \text{LS++}\}$. }
\label{tab:normal_iterations_costs2}
\end{table}

\section{Cost improvement through greedy initialization}\label{sec:greedy_init}

Now we want to consider how using \textit{greedy $d^2$-sampling} did improve the average costs after reaching the time limits in section~\ref{sec:fancy_greedy_compare}.

For $A\in \{\text{FLS++}, \text{LS++}, \text{KM++}\}$ we define their average costs $c_A,c_A^G$ where the superscript G indicates if we use \textit{greedy $d^2$-sampling}.
In our case we compare both average costs over all runs $r\in [R]$ when using time limit $t_B^r$.
Lastly, similar to the analysis above, we define the improvement factor as $(1-\frac{c_A^G}{c_A})\cdot 100\%$.
As we can see in Table~\ref{tab:greedy_nogreedy_comp_table} for most data sets and algorithms the improvement is not below $0\%$ and most of the time larger than $0\%$.

\begin{table}
\centering
\scalebox{.8}{
\begin{adjustbox}{width=\textwidth}
\begin{tabular}{|c||c|c|c|}
\hline
data set & $C(\text{GKM++}, \text{KM++})$ & $C(\text{GLS++}, \text{LS++})$ & $C(\text{GFLS++}, \text{FLS++})$ \\
\hline \hline
\textit{ pr91 } & $0.48\%$ & $0.1\%$ & $0.06\% $\\ \hline 
\textit{ bio train features } & $0.07\%$ & $0.03\%$ & $-0.01\% $\\ \hline 
\textit{ concrete } & $2.91\%$ & $0.58\%$ & $0.54\% $\\ \hline 
\textit{ circles } & $20.16\%$ & $8.8\%$ & $0\% $\\ \hline 
\hline 
\textit{ close circles } & $0.59\%$ & $0.32\%$ & $-0.24\% $\\ \hline 
\textit{ D31 } & $1.58\%$ & $0\%$ & $0\% $\\ \hline 
\textit{ frymire } & $0.59\%$ & $-0.26\%$ & $-0.17\% $\\ \hline 
\textit{ rectangles } & $2.56\%$ & $0\%$ & $0\% $\\ \hline 
\textit{ s3 } & $0.87\%$ & $0.23\%$ & $0.04\% $\\ \hline 
\textit{ Body measurements } & $1.72\%$ & $0.46\%$ & $0.49\% $\\ \hline 
\end{tabular}
\end{adjustbox}
}
\caption{Average cost decrease by using \textit{greedy $d^2$-sampling}.}
\label{tab:greedy_nogreedy_comp_table}
\end{table}

At last we analyse the number of times each algorihm did report the strictly best solution, the number of iterations, average costs and cost factor when compared to GFLS++.
Like in the case when using \textit{normal $d^2$-sampling} we can see in Tables~\ref{tab:greedy_iterations_wins} and~\ref{tab:greedy_iterations_costs} that GFLS++ still on average computes the solution with the smallest cost and does so with large probability. In case of the dataset \textit{circles}, now GLS++ and GFLS++ computed the optimal solution after the timelimit in every round. But we can see in Figure~\ref{fig:algCompGreedy} that GFLS++ does so faster than GLS++.

\begin{table}
\centering
\scalebox{.9625}{
\begin{adjustbox}{width=\textwidth}
\begin{tabular}{|c||Sc|Sc|Sc|c|c|}
\hline
data set & $c_{\text{GKM++}}$ & $c_{\text{GLS++}}$ & $c_{\text{GFLS++}}$ & $C(\text{GFLS++}, \text{GKM++})$ & $C(\text{GFLS++}, \text{GLS++})$\\
\hline \hline
\textit{ pr91 } & $ 9.518\text{{E+}}08 $ & $ 9.5177\text{{E+}}08 $ & $ 9.4638\text{{E+}}08 $ & $ 0.57\% $ & $ 0.57\% $ \\      \hline
\textit{ bio train features } & $ 2.3902\text{{E+}}11 $ & $ 2.3877\text{{E+}}11 $ & $ 2.3866\text{{E+}}11 $ & $ 0.15\% $ & $ 0.05\% $ \\      \hline
\textit{ concrete } & $ 3.1772\text{{E+}}06 $ & $ 3.1591\text{{E+}}06 $ & $ 3.1138\text{{E+}}06 $ & $ 2\% $ & $ 1.43\% $ \\      \hline
\textit{ circles } & $ 2.5481\text{{E+}}05 $ & $ 2.4773\text{{E+}}05 $ & $ 2.4773\text{{E+}}05 $ & $ 2.78\% $ & $ 0\% $ \\      \hline
\hline
\textit{ close circles } & $ 6.0113\text{{E+}}05 $ & $ 6.0244\text{{E+}}05 $ & $ 5.9485\text{{E+}}05 $ & $ 1.04\% $ & $ 1.26\% $ \\      \hline
\textit{ D31 } & $ 3393.26 $ & $ 3393.26 $ & $ 3393.26 $ & $ 0\% $ & $ -0\% $ \\      \hline
\textit{ frymire } & $ 1.33\text{{E+}}09 $ & $ 1.3321\text{{E+}}09 $ & $ 1.3244\text{{E+}}09 $ & $ 0.42\% $ & $ 0.58\% $ \\      \hline
\textit{ rectangles } & $ 1.46 $ & $ 1.46 $ & $ 1.46 $ & $ 0\% $ & $ 0\% $ \\      \hline
\textit{ s3 } & $ 6.1786\text{{E+}}12 $ & $ 6.1742\text{{E+}}12 $ & $ 6.137\text{{E+}}12 $ & $ 0.67\% $ & $ 0.6\% $ \\      \hline
\textit{ Body measurements } & $ 14637.51 $ & $ 14584.68 $ & $ 14357.35 $ & $ 1.91\% $ & $ 1.56\% $ \\      \hline
\end{tabular}
\end{adjustbox}
}
\caption{Respective average cost of each algorithm. Last two columns show $(1-c_{\text{GFLS++}}/c_A) \cdot 100\%$ for $A\in \{\text{GKM++}, \text{GLS++}\}$. }
\label{tab:greedy_iterations_costs}
\end{table}

\section{Local search remaining Figures}\label{sec:remaining_figures}

\begin{figure}
    \begin{subfigure}[b]{0.45\textwidth}
        \centering
        \includegraphics[width=\textwidth]{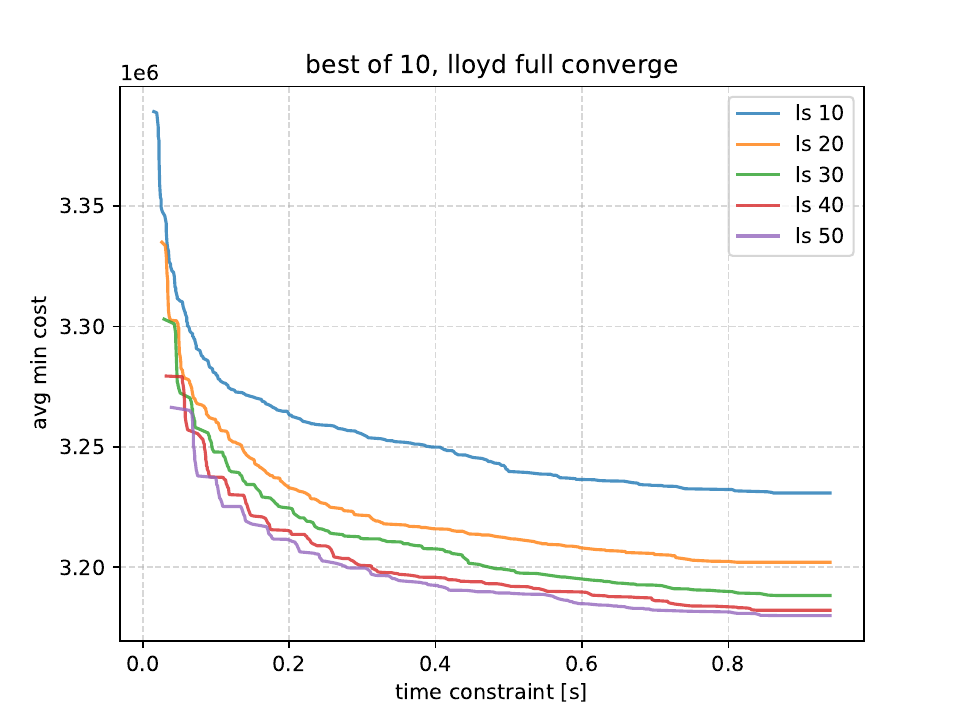}
        \caption{cost developments LS++.}
    \end{subfigure}%
    \begin{subfigure}[b]{0.45\textwidth}
        \centering
        \includegraphics[width=\textwidth]{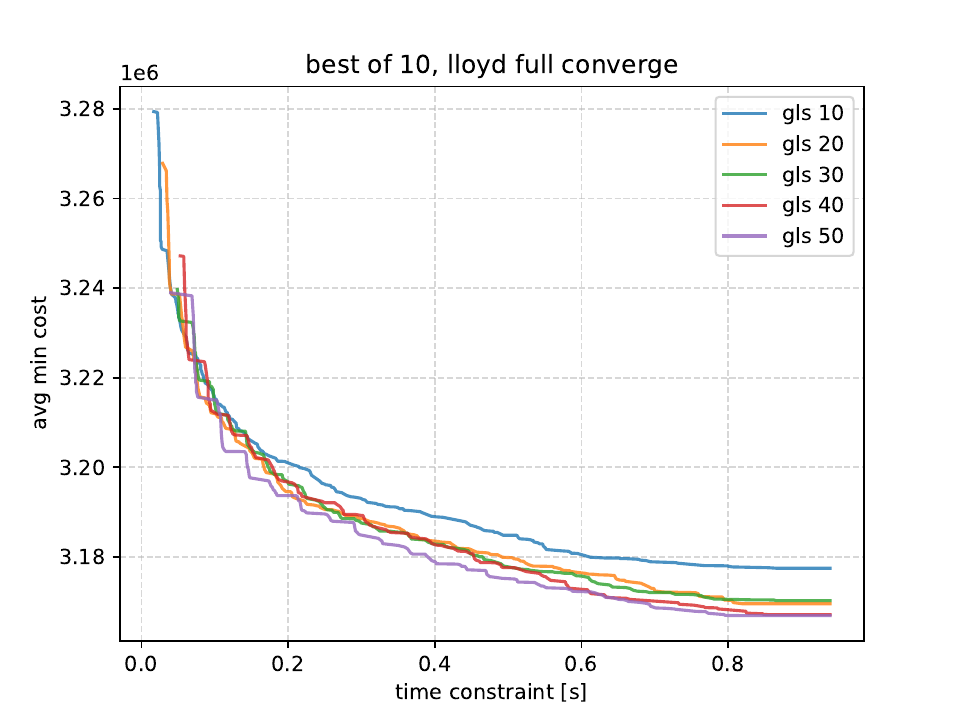}
        \caption{cost developments GLS++.}
    \end{subfigure}%
    \\
    \begin{subfigure}[b]{0.45\textwidth}
        \centering
        \includegraphics[width=\textwidth]{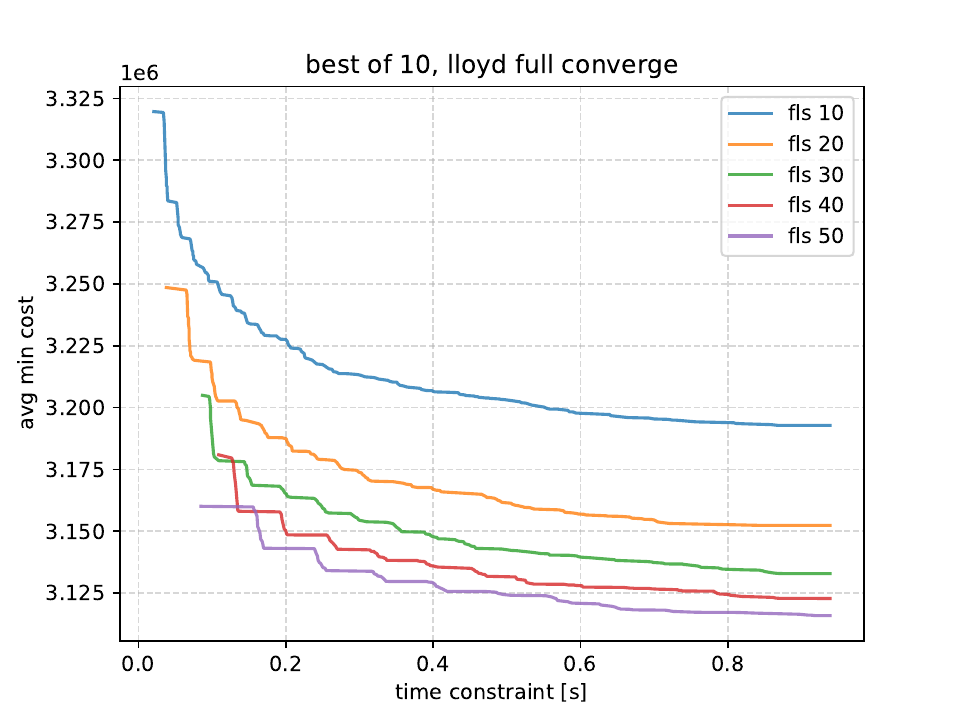}
        \caption{cost developments FLS++.}
    \end{subfigure}%
    \begin{subfigure}[b]{0.45\textwidth}
        \centering
        \includegraphics[width=\textwidth]{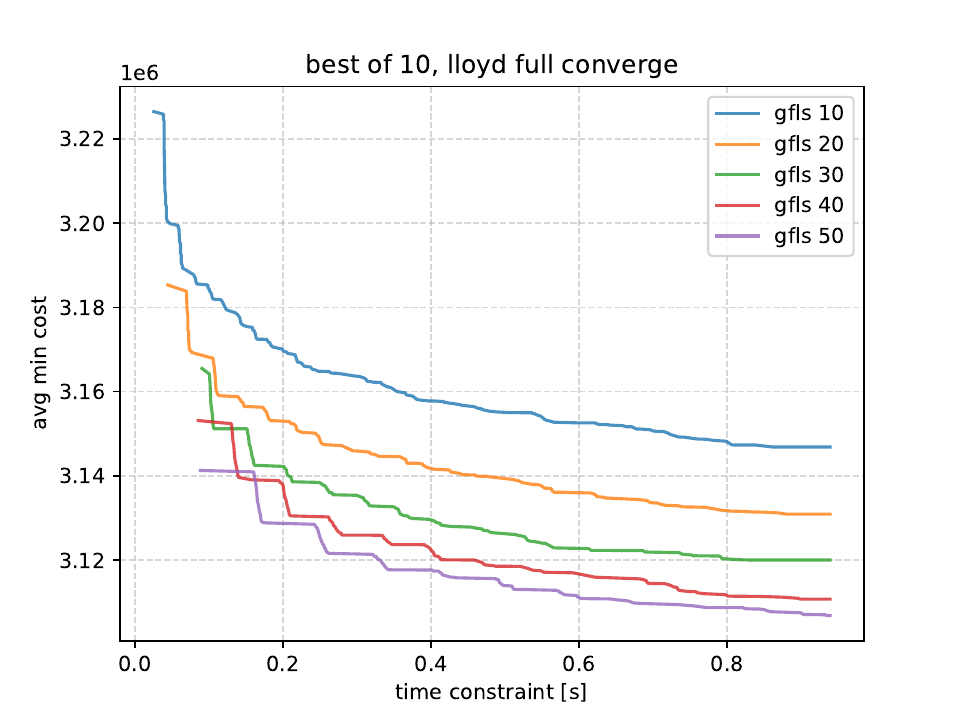}
        \caption{cost developments GFLS++.}
    \end{subfigure}%
    \caption{Dataset \textit{concrete}. Comparison of the average cost decrease for (G)LS++ and (G)FLS++ for $k=60$, $R=100$ and varying values of $Z$.} 
\label{fig:var_Z_fancy}
\end{figure}

\begin{figure}
    \begin{subfigure}[b]{0.45\textwidth}
    \includegraphics[width=\textwidth]{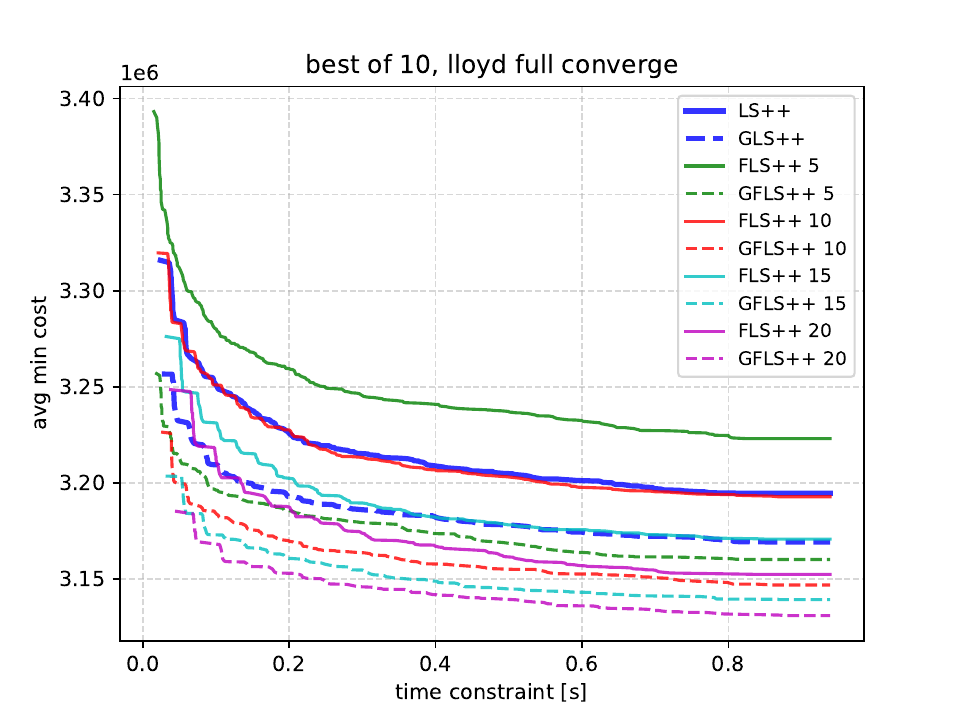}
    \caption{Dataset \textit{concrete}, $k=60$.}
    \end{subfigure}
    \begin{subfigure}[b]{0.45\textwidth}
        \includegraphics[width=\textwidth]{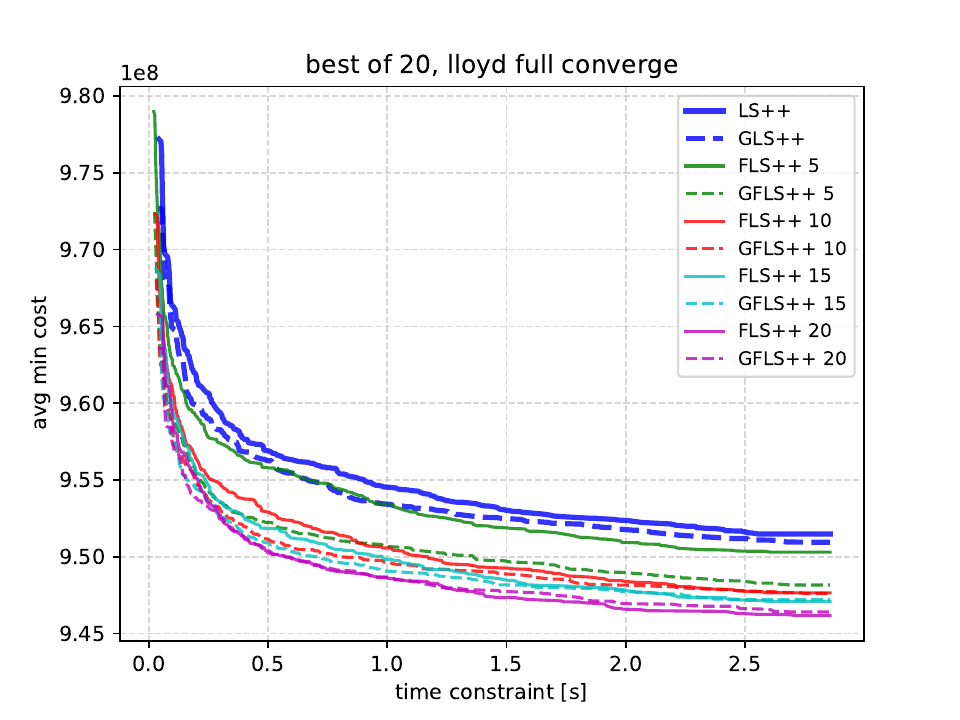}
    \caption{Dataset \textit{pr91}, $k=50$. }
    \end{subfigure}
    \caption{Fixing number of local seach steps for LS++ and GLS++ while using a variable number for FLS++ and GFLS++. The number after either FLS++ or GFLS++ represents the value $Z$ for this algorithm.}
    \label{fig:same_ls_steps}
\end{figure}

\begin{figure}[t]
  \centering
        \includegraphics[width=0.9\textwidth]{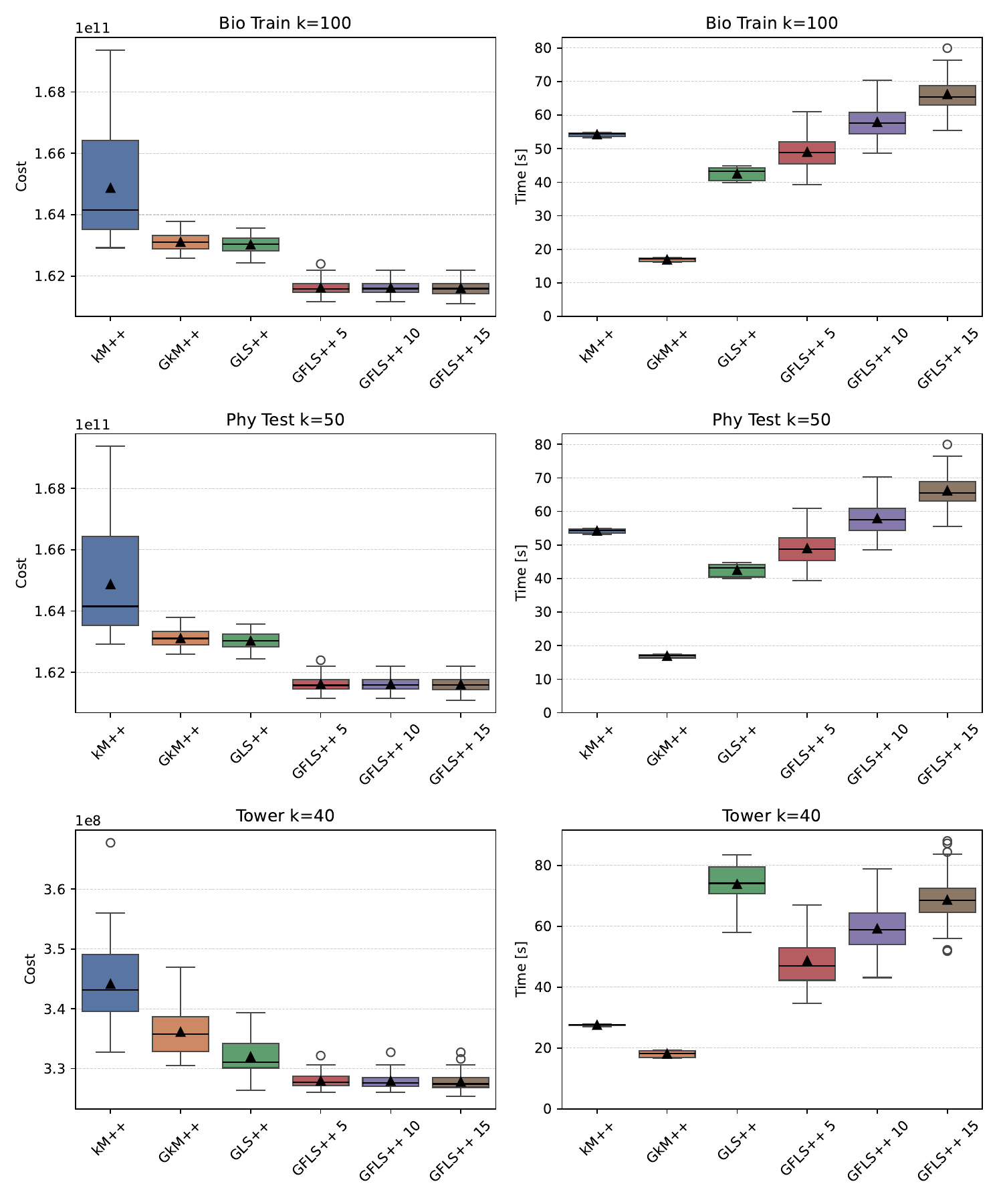}
    \caption{Performance boxplots for different values of $k$ and one additional dataset.}
\label{fig:boxplot_appendix}
\end{figure}

\end{document}